\documentclass[12pt, draftclsnofoot, onecolumn]{IEEEtran}

\pdfoutput=1

\usepackage{xspace}
\usepackage{todonotes}

\usepackage{tikz}
\usepackage{pgfplots}
\usepackage{tikzscale}
\pgfplotsset{compat=newest}
\usetikzlibrary{plotmarks}
\usepackage{rotating}
\usepackage[absolute,overlay]{textpos}
\usepackage{circuitikz}

\usepackage{amsmath}
\usepackage{amssymb}
\usepackage{amsthm}








\newcommand{\R}{\mathbb{R}}

\newcommand\eqdef{\triangleq}

\newcommand{\vct}[1]{\boldsymbol{#1}}

\newcommand{\vl}{\vct{l}}

\newcommand{\vp}{\vct{p}}

\newcommand{\vt}{\vct{t}}

\newcommand{\vx}{\vct{x}}

\newcommand{\vE}{\vct{E}}



\newcommand{\vbeta}{\vct{\beta}}

\newcommand{\vlambda}{\vct{\lambda}}

\newcommand{\vmu}{\vct{\mu}}


%



\newcommand{\fpart}[2]{\frac{\partial #1}{\partial #2}}

\usepackage{hyperref}

\usepackage{hf-tikz}
\usetikzlibrary{ 
	calc,
	arrows,
	arrows.meta,
	automata, 
	shapes, 
	snakes, 
	positioning, 
	decorations,
	decorations.text,
	fit,
	matrix,
	mindmap
}

\usetikzlibrary{calc}
\usetikzlibrary{decorations.pathreplacing,decorations.markings,shapes.geometric}
\usetikzlibrary{shapes}
\usetikzlibrary{decorations.pathreplacing}

\tikzset{radiation/.style={{decorate,decoration={expanding waves,
angle=90,segment length=4pt}}}}
\tikzset{relay/.pic={
            code={\tikzset{scale=5/10}
            \draw[semithick] (0,0) -- (1,4);
            \draw[semithick] (3,0) -- (2,4);
            \draw[semithick] (0,0) arc (180:0:1.5 and -0.5)
            node[above, midway]{#1};
            \node[inner sep=4pt] (circ) at (1.5,5.5) {};
            \draw[semithick] (1.5,5.5) circle(8pt);
            \draw[semithick] (1.5,5.5cm-8pt) -- (1.5,4);
            \draw[semithick] (1.5,4) ellipse (0.5 and 0.166);
            \draw[semithick,radiation,decoration={angle=45}]
            (1.5cm+8pt,5.5) -- +(0:2);
            \draw[semithick,radiation,decoration={angle=45}] 
            (1.5cm-8pt,5.5) -- +(180:2);}}
        }

\makeatletter
\tikzset{
    database/.style={
        path picture={
            \draw (0, 1.5*\database@segmentheight) circle [x radius=\database@radius,
            y radius=\database@aspectratio*\database@radius];
            \draw (-\database@radius, 0.5*\database@segmentheight)
            arc [start angle=180,end angle=360,x radius=\database@radius,
            y radius=\database@aspectratio*\database@radius];
            \draw (-\database@radius,-0.5*\database@segmentheight)
            arc [start angle=180,end angle=360,x radius=\database@radius,
            y radius=\database@aspectratio*\database@radius];
            \draw (-\database@radius,1.5*\database@segmentheight)
            -- ++(0,-3*\database@segmentheight) arc [start angle=180,end angle=360,
            x radius=\database@radius, y radius=\database@aspectratio*\database@radius]
            -- ++(0,3*\database@segmentheight);
        },
        minimum width=2*\database@radius + \pgflinewidth,
        minimum height=3*\database@segmentheight + 2*\database@aspectratio*\database@radius + \pgflinewidth,
    },
    database segment height/.store in=\database@segmentheight,
    database radius/.store in=\database@radius,
    database aspect ratio/.store in=\database@aspectratio,
    database segment height=0.1cm,
    database radius=0.25cm,
    database aspect ratio=0.35,
}
\makeatother

\usepackage{graphicx}
\usepackage{amsmath}
\usepackage{array}
\usepackage[caption=false,font=footnotesize]{subfig}
\usepackage{microtype}
\usepackage{dblfloatfix}
\usepackage{cite}
\usepackage{algorithm, algorithmic}
\usepackage{ctable}
\usepackage{siunitx}

\usepackage[inline]{enumitem}
\usepackage{lipsum}
\makeatletter
\newcommand{\specialcell}[1]{\ifmeasuring@#1\else\omit$\displaystyle#1$\ignorespaces\fi}
\makeatother

\newtheorem{lemma}{Lemma}

\begin{document}
\title{
    Leveraging User-Diversity in Energy-Efficient Edge-Facilitated
    Collaborative Fog Computing
}

\author{%
\IEEEauthorblockN{%
    Antoine Paris,
	Hamed Mirghasemi,
	Ivan Stupia and
	Luc Vandendorpe}
\IEEEauthorblockA{%
    \small{%
	ICTEAM/ELEN/CoSy, UCLouvain, Louvain-la-Neuve, Belgium\\
	Email: \{%
    \href{mailto:antoine.paris@uclouvain.be}{antoine.paris},
	\href{mailto:seyed.mirghasemi@uclouvain.be}{seyed.mirghasemi},
	\href{mailto:ivan.stupia@uclouvain.be}{ivan.stupia},
	\href{mailto:luc.vandendorpe@uclouvain.be}{luc.vandendorpe}\}%
    @uclouvain.be}
}}

\maketitle

\begin{abstract}
    With the increasing number of heterogeneous resource-constrained devices
    populating the current wireless ecosystem, enabling ubiquitous
    computing at the edge of the network requires moving part of the computing
    burden back to the edge to reduce user-side latency and relieve the
    backhaul network. Motivated by this challenge, this work
    investigates edge-facilitated collaborative fog-computing to augment the
    computing capabilities of individual devices while optimizing for energy-efficiency. 
    Collaborative-computing is modeled using the Map-Reduce framework, consisting in
    two computing rounds and a communication round.
    The computing load is optimally distributed among
    devices, taking into account their diversity in terms of computing
    and communications capabilities. Devices local parameters such as CPU
    frequency and RF transmit power are also optimized for energy-efficiency.
    The corresponding optimization problem is shown to
    be convex and optimality conditions are obtained through Lagrange duality
    theory. A waterfilling-like interpretation for the size of the computing load
    assigned to each device is given. Numerical experiments demonstrate the
    benefits of the proposed collaborative-computing scheme over various
    other schemes in several respects. Most notably, the proposed scheme exhibits
    increased probability of successfully dealing with more demanding computations
    along with significant energy-efficiency gains.
    Both improvements come from the scheme ability to advantageously leverage
    devices diversity.
\end{abstract}

\begin{IEEEkeywords}
    wireless collaborative computing, Map-Reduce, energy-efficiency, joint
    computation and communications optimization, fog computing.
\end{IEEEkeywords}

\section{Introduction}
The current trends in communications and networking suggest that the future
wireless ecosystem will be populated by a massive number of heterogeneous
(in terms of computing and communication capabilities) devices: from relatively
powerful smartphones and laptops to ultra-low-power sensors,
actuators and other connected ``things''~\cite{fog-iot}. At the same time, emerging
applications like virtual and augmented reality,
context-aware computing, autonomous driving,
Internet of Things (IoT) and so forth, require more and more computing capabilities
while aiming for smaller and smaller latency.
All in all, recent years have seen the focus moving from communications as
an objective per se to communications as a way to enhance computing capabilities
of energy limited \mbox{devices~\cite{computing-power,comp-while-comm}}.

This paradigm shift started with Mobile Cloud Computing (MCC)~\cite{mcc} first,
and with Multi-access Edge Computing (MEC)~\cite{MEC-5G, mec-survey2, mec-survey3}
later on. While MCC proved itself to be effective to enable ubiquitous computing
on resource-constrained devices while prolonging their battery life, MEC has the
advantage of offering smaller computing latency and reducing the pressure
on the backhaul network. This makes MEC both more suitable for ultra-low-latency
applications emerging from the recent 5G developments and more able to cope
with the ever growing number of connected devices and their ever growing computing
demand. Compared to MCC, the inherent spatial distribution of MEC also has the
advantage of offering some level of decentralization.

In contexts where MCC latency is unacceptable and in the absence of MEC
servers nearby or when the use of third-party owned MCC/MEC is deliberately ruled-out
for privacy reasons, \textit{fog computing} offers an even more decentralized alternative.
Fog computing is formally defined in~\cite{fog-computing-def} as ``\textit{a huge number of
heterogeneous wireless devices that \textbf{communicate and potentially cooperate
among them and with the network} to perform processing tasks without the intervention
of third parties}''.
Those distributed computing resources can also be exploited to enhance the computing
capabilities of \textit{individual} devices. As MEC, fog computing benefits from
reduced user-side latency and reduces the pressure on the backhaul network: two features
recognized as key enablers for ubiquitous artificial intelligence (AI) at the edge of the
network~\cite{edge-ai}.
Achieving this, however, requires to move part of the computing burden
from the powerful server-side to the resource-constrained user-side.
To accommodate for relatively complex processing tasks on such limited 
devices requires to (i) enable devices collaboration to pool computing capabilities
and (ii) take care of devices
resources management, both in terms of computing resources and communications resources.
It is also worth noting that fog computing can be integrated in a multi-tier
architecture along with MEC and MCC~\cite{multi-tier}. In this paper, we jointly
optimize for energy-efficiency the computation and communication resources of a 
set of heterogeneous and resources-constrained mobile devices taking part in
collaborative fog computing.

\subsection{Application scenario}
\label{sec:app}
As already mentioned, enabling computationally demanding intelligent mobile systems
at the edge of the network requires to offload part of the computing burden to mobile
devices to reduce user-side latency and relieve the backhaul network~\cite{edge-ai}.
A recent comprehensive survey on ``Deep Learning in Mobile and Wireless Networking''%
~\cite{dl-survey} discusses fog computing to support those two objectives in the
context of machine learning (ML) or deep learning (DL) inference; two key components
of ubiquitous AI.
ML/DL models, however, often contain tens to hundreds of millions of parameters.
As such, it might be prohibitive or even impossible for a single mobile
device limited in computing, memory and battery capacity to process (or even store) the
full ML/DL model needed to perform the inference with a reasonable latency%
~\cite{distr-inf}.
Though it is possible to reduce the size of a ML/DL model using various model compression
techniques such as pruning, weights quantization and so forth, this always comes
at the cost of accuracy~\cite{squeeze-dl}. Enabling devices collaboration
to augment their individual processing capabilities is another solution that
does not sacrifice accuracy~\cite{distr-inf}. Combined with proper resources management
and allocation to preserve devices battery life, this collaborative inference approach
should allow the processing of reasonably large ML/DL models.
It is worth noting that this kind of distributed/collaborative inference is envisioned
as a key enabler to ubiquitous AI in future 6G networks~\cite{justify-map-reduce}.
This example application scenario, also known as ``model-split'' inference%
~\cite{model-split} and illustrated and detailed in Fig.~\ref{fig:edge-based-dl},
thus consists in distributing a ML/DL model pre-trained off-device in the cloud to
perform collaborative on-device inferences on local input data later on.

\begin{figure}
    \centering
    \input{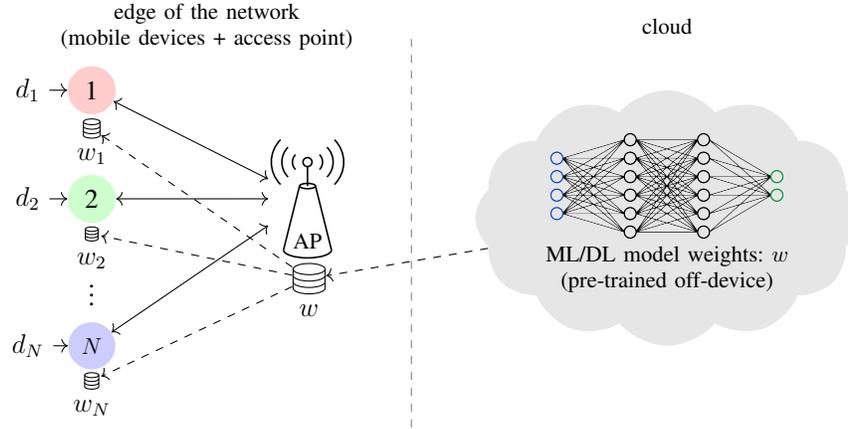}
    \caption{Collaborative fog computing ML/DL inference scenario (adapted
    from the edge-based app-level mobile data analysis approach illustrated in%
    ~\cite{dl-survey}). The ML/DL model is first trained off-device in the cloud
    using offline datasets. The ML/DL model weights, noted $w$, are then offloaded
    to the edge of the network for future on-device inference.
    Each device $n \in [N]$ wants to perform some inference $\phi(d_n, w)$
    on its local input data $d_n$ using the model weights $w$.
    However, this operation might be prohibitive or even impossible
    to carry on a single mobile device due to memory, computing capabilities or
    battery limitations. Distributing the ML/DL model weights $w$ across multiple
    devices to enable collaborative inference is envisioned as a potential
    solution to this issue~\cite{distr-inf}. 
    It is also worth noting again that
    this fog computing inference scenario significantly reduces backhaul network
    traffic and user-side latency compared to the cloud-based scenario in which
    inferences are performed in the cloud~\cite{dl-survey}.%
    }
    \label{fig:edge-based-dl}
\end{figure}

\subsection{Related works and motivations}
This paper extends our previous work~\cite{me} with a more realistic energy
consumption model and more degrees of freedom in the optimization of the
collaborative-computing scheme. This new work can actually be particularized
to our previous one (see Sec.~\ref{sec:num-exp} for comparison).

The system model used in this paper is inspired by previous works on wireless
distributed computing (WDC)%
~\cite{scalableWDC, coding-edge, comm-comp-tradeoff, wireless-mr, coding-fog-computing,%
xu2019}.
Most notably, we also use the Map-Reduce distributed computing framework%
~\cite{MapReduce} with an access point (AP) or base station (BS) facilitating
the communications between devices (i.e., communications are \textit{edge-facilitated}).
The choice of Map-Reduce is also motivated by the example application of
distributed/collaborative ML/DL inference~\cite{edge-ai,justify-map-reduce}. 
The vast majority of these works, however, study WDC from an information
theoretic point of view, focusing on coded distributed computing
(CDC) and discussing the trade-offs between the computation and communication
loads incurred by the collaboration. In short, the conclusion is that 
increasing devices computing load makes it possible to leverage
network coding opportunities during the exchange of intermediate computation
results, hence reducing the communication load.
\textit{Considering the inherent energy-limited nature of mobile devices as the main
bottleneck to WDC, the focus in this work is shifted
towards the allocation of computing and communication resources and optimization of the
collaboration to minimize devices energy consumption}.

Most existing works on WDC consider the set of collaborating devices
to be homogeneous in terms of computing and communication capabilities (with
recent notable exceptions -- still focused on CDC -- like~\cite{coding-edge}
and~\cite{xu2019}).
Under these conditions, the computing load is thus uniformly distributed across all
devices. \textit{Motivated by current trends in the wireless ecosystem, we here
consider the set of devices to be \textit{heterogeneous} instead}.
As a consequence, it might no longer be optimal to uniformly distribute the
computing load (e.g., the ML/DL model weights, see Fig.~\ref{fig:edge-based-dl})
across mobile devices. To allow our collaborative-computing scheme to take into
account -- and leverage -- devices diversity, our model thus
allows arbitrary partition of the computing load. Compared to previous works
focused on CDC, we also make the latency constraint accompanying the computing
task an explicit constraint.

Various other cooperative-computing schemes were also studied in the literature,
see e.g.,~\cite{coop-mec, coop-comp4, coop-comp1, swipt-coop-mec, coop-comp2,%
coop-comp3}.
\cite{coop-mec}~discusses cooperative-computing and cooperative communications
in the context of MEC systems wherein a user can partially (or totally) offload
a computing task to both a MEC server and a so-called helper device that can
then
\begin{enumerate*}[label=(\roman*)]
    \item perform some local computations for the user device (i.e.,
    cooperative computing),
    \item further offload part or all the task to the MEC server (i.e., cooperative
    communication), or
    \item both.
\end{enumerate*}
The system model and problem formulation used in this work also owes
a lot to~\cite{coop-mec}, especially with regards to devices energy consumption
models.~\cite{coop-comp4} also devises an energy-efficient
cooperative-computing scheme in which a mobile device can partially or totally
offload a computing task to a surrounding idle device acting as a helper.
In the context of Mobile Wireless Sensors Networks
(MWSNs),~\cite{coop-comp1} augments this framework by optimally selecting the
helper device among a set of $N$ surrounding devices.
In~\cite{swipt-coop-mec}, a wireless powered cooperative-computing scheme wherein
a user device can offload computations to $N$ helper devices is described.
In~\cite{coop-comp2} authors describe Mobile Device Cloud (MDC), i.e., a framework
in which power balancing is performed among a cluster of mobile devices,
and empirically optimize the collaboration to maximize the lifetime of the
set of devices. Finally, in~\cite{coop-comp3}, an energy-efficient and
incentive-aware network-assisted (i.e., coordinated by the edge of the network),
device-to-device (D2D) collaboration framework is presented. 

\subsection{Objective and contributions}
Given the heterogeneous nature of mobile devices and their limitations in terms
of memory, computing capabilities and battery life,
this work aims to provide insights into the following question:
\textbf{how to distribute a computing load
(e.g., ML/DL model) across an heterogeneous set of resource-constrained
wireless devices to complete a given set of computing tasks (e.g., ML/DL inferences)
in the most energy-efficient way, while satisfying a given deadline}?

More precisely, the contributions of this paper can be summarized as follows:
\begin{itemize}
    \item we propose an $N$-devices edge-facilitated collaborative fog computing
    scheme based on the Map-Reduce distributed computing framework~\cite{MapReduce}
    and formulate a joint computation and communication resources optimization problem
    with energy-efficiency as objective~;
    \item we gain engineering insights into the structure of
    the optimal solution by leveraging Lagrange duality theory and offer a
    waterfilling-like interpretation for the size of the computing load
    assigned to each device~;
    \item through numerical experiments, we compare the performance of the
    proposed scheme with various other schemes using less degrees-of-freedom
    in the optimization (such as the one proposed in~\cite{me}) to analyze
    the relative benefits of each set of variables being optimized and to
    show that the proposed scheme advantageously exploits devices diversity.
\end{itemize}

\subsection{Organization of the paper}
Section~\ref{sec:system-model} starts by describing in details
the collaborative computing model and the energy and
time consumption models for both local computation and 
edge-facilitated communications.
Next, Sec.~\ref{sec:problem-formulation} formulates the joint computation
and communication resources allocation problem and analyzes its feasibility.
Section~\ref{sec:optimal-sol} then reformulates the problem, proves its
convexity in this new formulation and leverages Lagrange duality theory
to gain some insights on the optimal solution of the problem.
Section~\ref{sec:num-exp} benchmarks the performances of
the optimal collaborative-computing scheme against various other
schemes through numerical experiments. 
Finally, Sec.~\ref{sec:ccl} discusses the results obtained in this
work, their limits, and opportunities for future research.

\section{System model}
\label{sec:system-model}
As already illustrated in Fig.~\ref{fig:edge-based-dl}, we consider
an heterogeneous set of $N$ wireless devices indexed by $n \in [N]$ sharing
a common AP or BS. Each device $n$
wants to perform a given computing task $\phi(d_n, w)$ within a given
latency $\tau$, with $d_n$ some $D$-bit local input data to
device $n$ and $w$ some $L$-bit data common to all $N$
devices. As detailed in Sec.~\ref{sec:app} and in
Fig.~\ref{fig:edge-based-dl}, $w$ could be a ML/DL model that was
pre-trained in the cloud, while $\phi(d_n, w)$ could represent an
inference performed using this model $w$ on an input $d_n$.
Motivated by this example application, it is assumed that $L \gg D$.
As a consequence of the large size of $w$, it might be impossible or
prohibitive in terms of energy consumption for an \textit{individual}
device to complete the computing task $\phi(d_n, w)$ within the
deadline $\tau$. Devices thus pool together and collaborate to
augment their individual computing capabilities.
The collaborative computing model used in this work, i.e.,
Map-Reduce~\cite{MapReduce}
is described in Sec.~\ref{sec:distributed-comp-model}.
Next, and because we are here concerned by optimizing the
energy-efficiency of the collaboration,
Sec.~\ref{sec:local-comp-model} and~\ref{sec:comm-model} describe
the models used to quantify the time needed and energy consumed
by the different phases of the collaboration.

The AP/BS is responsible for coordinating and optimizing the collaboration.
This makes our collaborative-computing scheme edge-facilitated (or
network-assisted) and fits with the fog computing definition given in
the introduction.
To allow for offline optimization of the collaborative-computing scheme,
we assume that the AP/BS has perfect non-causal knowledge of the uplink
channels (i.e., devices to AP/BS) during the communication phase, and
perfect knowledge of the computing and communication capabilities of all
devices.
Although unrealistic with regard to the channels, this simplifying
assumption allows us to provide a first performance evaluation of the
proposed collaborative fog computing scheme in a best-case scenario.

\subsection{Collaborative computing model}
\label{sec:distributed-comp-model}
The computing tasks $\{\phi(d_n, w)\}_{n=1}^N$ (e.g., ML/DL inferences)
are shared between $N$ devices according to the Map-Reduce
framework~\cite{MapReduce}.
First, we assume that the $L$-bit data
$w$ (e.g., ML/DL model weights) can be arbitrarily partitioned in
$N$ smaller $l_n$-bit data $w_n$ (one for each device $n$)
with $l_n \in \R_{\ge 0}$\footnote{In practice, $l_n$ should be an
integer multiple of the size of the smallest possible division of $w$.
In this work, we relax this practical consideration to
avoid dealing with integer programming later on. Note that $l_n = 0$
is also possible, in which case device $n$ \textbf{does not participate} to
the collaboration.} and
\begin{equation}
    \textstyle{\sum_{n=1}^N l_n = L}.
    \label{const:tot-dist}
\end{equation}
As opposed to previous works focusing
on CDC~\cite{scalableWDC, coding-edge, comm-comp-tradeoff, wireless-mr,%
coding-fog-computing,xu2019},
we are not assuming any redundancy in the computing loads $\{w_n\}_{n=1}^N$
assigned to each device, that is $w_i \cap w_j = \emptyset$ for all $i \neq j$.
Also, the sizes $\{l_n\}_{n=1}^N$ of the assigned computing loads $\{w_n\}_{n=1}^N$
are optimized for energy-efficiency taking into account the diversity of devices
instead of being uniform (e.g., $l_n = L/N$ or a multiple for all~$n$) and fixed
ahead of time.
Assuming relatively large downlink rates, and because the focus is on
the energy consumption of mobile devices (rather than the energy consumption
of the AP), we neglect the time and energy needed to transmit $w_n$ from the
AP to device $n$, for all~$n \in [N]$.
To make collaborative-computing possible, we also assume that the $D$-bit local
input data $\{d_n\}_{n=1}^N$ were shared between mobile devices through the AP in a
prior phase that we neglect in this work because $D$ is assumed to be
relatively small compared to $L$~\cite{scalableWDC, coding-edge}.

\paragraph{\textbf{Map}}
During the first phase of the Map-Reduce framework, namely
the \textit{Map phase}, each device $n$ produces intermediate computation results
(e.g., partial inference results using a subset $w_n$ of the ML/DL model weights $w$)
\[ g_n(d_1, w_n), g_n(d_2, w_n), \dots, g_n(d_N, w_n) \]
where $g_n$ is the \textit{Map function} executed at device $n$. 
The size of the intermediate computation result $g_n(d_m, w_n)$ produced by
device $n$ for device $m$ is assumed to be proportional to the size
$l_n$ of its assigned computing load $w_n$ and is given by $\beta l_n$.
Each device $n$ thus computes intermediate computation
results for all the other devices, i.e., $g_n(d_m, w_n)$ for all $m \neq n$,
and for itself, i.e., $g_n(d_n, w_n)$, using the part $w_n$ of $w$ received from
the AP. The Map phase is illustrated in the colored and framed columns of
Fig.~\ref{fig:map-reduce}.

\paragraph{\textbf{Shuffle}}
Next, devices exchange intermediate computation results with each
other in the so-called \textit{Shuffle phase}.
As already mentioned multiple times, coded shuffling%
~\cite{scalableWDC, coding-edge, comm-comp-tradeoff, wireless-mr,%
coding-fog-computing,xu2019}
is not considered in this work. In this simplified Shuffle phase, each device
$n$ thus directly transmits the intermediate computation results $g_n(d_m, w_n)$
to device $m$ via the AP, for all $m \neq n$. Device $n$ thus needs to
transmit a total of $(N-1)\beta l_n$ bits of intermediate computation
results to the AP. To ease notations in the rest of the paper, we define
$\alpha = (N-1)\beta$. The Map phase can thus be
seen as a data compression phase, reducing the size of $w_n$ from $l_n$ bits
to $\alpha l_n$ bits of intermediate computation results before transmission in the
Shuffle phase. The intermediate computation results exchanged during
the Shuffle phase are indicated in bold on Fig.~\ref{fig:map-reduce}.

\paragraph{\textbf{Reduce}}
Finally, during the \textit{Reduce phase}, each device $m$
combines a total of $\sum_{n=1}^N \beta l_n = \beta L$ bits of intermediate
computation results $\{g_n(d_m, w_n)\}_{n=1}^N$ produced by all the
collaborating devices as
\[ \phi(d_m, w) = h_m(g_1(d_m, w_1), g_2(d_m, w_2), \dots, g_N(d_m, w_N)) \]
where $h_m$ is the \textit{Reduce function} executed at device $m$.
This last operation, which could be thought of as combining all the partial
inference results produced by all the devices to get the final inference
$\phi(d_m, w)$, is illustrated in the colored rows on Fig.~\ref{fig:map-reduce}.

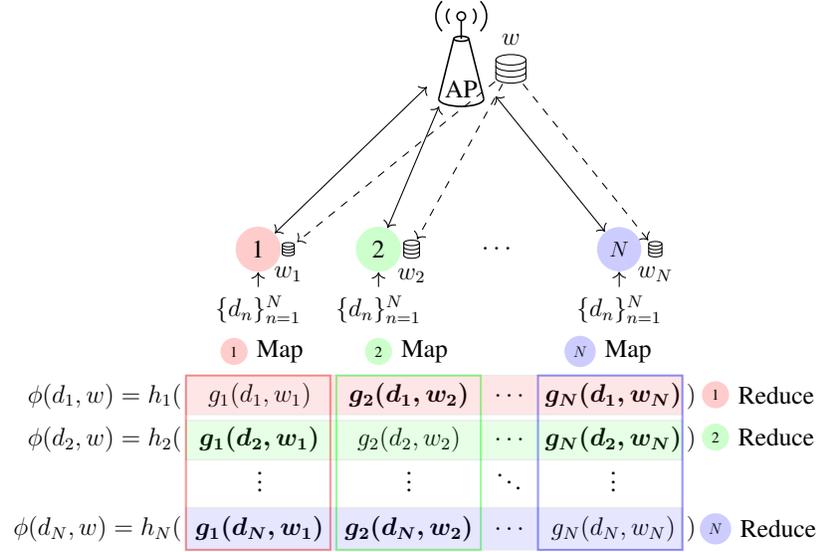
\begin{figure}
    \centering
    \begin{tikzpicture}[scale=0.80, every node/.style={scale=0.80}]
    \node[database, label=above:$w$] at (1.2, 0) (w) {};
    
    \pic[draw, scale=0.5, local bounding box=AP] at (0, -0.5)
    {relay=AP};

    \node[circle, fill=red!20] at (-3, -3) (1) {1};
    \node[database, database radius=0.1cm,
        database segment height=0.05cm,
        label=below:$w_1$] at (-2.5, -3) (w1) {};
    \node[circle, fill=green!20] at (-1, -3) (2) {2};
    \node[database, database radius=0.13cm,
        database segment height=0.07cm,
        label=below:$w_2$] at (-0.45, -3) (w2) {};
    \node at (1, -3) () {$\cdots$};
    \node[circle, fill=blue!20, scale=0.85] at (3, -3) (K) {$N$};
    \node[database, database radius=0.11cm,
        database segment height=0.06cm,
        label=below:$w_N$] at (3.6, -3) (wK) {};

    \draw[<->] (1) -- (AP);
    \draw[<->] (2) -- (AP);
    \draw[<->] (K) -- (AP);

    \node at (-3, -4) (d1) {$\{d_n\}_{n=1}^N$};
    \draw[->] (d1) -- (1);
    \node at (-1, -4) (d2) {$\{d_n\}_{n=1}^N$};
    \draw[->] (d2) -- (2);
    \node at (3, -4) (dK) {$\{d_n\}_{n=1}^N$};
    \draw[->] (dK) -- (K);

    \draw[->, dashed] (w) -- (w1);
    \draw[->, dashed] (w) -- (w2);
    \draw[->, dashed] (w) -- (wK);
    \node at (0, -6.5) {%
            $\begin{matrix}
                g_1(d_1, w_1)   & \boldsymbol{g_2(d_1, w_2)} & \cdots &
                \boldsymbol{g_N(d_1, w_N)} \\
                \boldsymbol{g_1(d_2, w_1)}   & g_2(d_2, w_2) & \cdots &
                \boldsymbol{g_N(d_2, w_N)} \\
                \vdots          & \vdots        & \ddots & \vdots \\
                \boldsymbol{g_1(d_N, w_1)}   &
                \boldsymbol{g_2(d_N, w_2)} & \cdots & g_N(d_N, w_N)
            \end{matrix}$
    };

    \draw[color=red!40, thick] (-4.2, -5.1) rectangle (-1.8, -8);
    \node[circle, fill=red!20, label=right:Map, scale=0.6] at
        (-3.4, -4.7) () {1};   
    \draw[color=green!40, thick]  (-1.7, -5.1) rectangle (0.7, -8);
    \node[circle, fill=green!20, label=right:Map, scale=0.6] at
        (-1, -4.7) () {2};   
    \draw[color=blue!40, thick]   (1.65, -5.1) rectangle (4.05, -8);
    \node[circle, fill=blue!20, label=right:Map, scale=0.6] at
        (2.35, -4.7) () {$N$};   
    
    \draw[fill=red, opacity=0.1] (-4.2, -5.1) rectangle (4.05, -5.75);
    \draw[fill=green, opacity=0.1] (-4.2, -5.85) rectangle (4.05, -6.5);
    \draw[fill=blue, opacity=0.1] (-4.2, -7.3) rectangle (4.05, -8);

    \node at (-5.55, -5.44) {$\phi(d_1, w) = h_1($};
    \node at (4.2, -5.44) {$)$};
    \node[circle, fill=red!20, scale=0.6, label=right:Reduce]
        at (4.6, -5.42) () {1};   
    \node at (-5.55, -6.16) {$\phi(d_2, w) = h_2($};
    \node at (4.2, -6.16) {$)$};
    \node[circle, fill=green!20, scale=0.6, label=right:Reduce]
        at (4.6, -6.125) () {2};  
    \node at (-5.67, -7.65) {$\phi(d_N, w) = h_N($};
    \node at (4.2, -7.65) {$)$};
    \node[circle, fill=blue!20, scale=0.6, label=right:Reduce]
        at (4.6, -7.65) () {$N$};   
\end{tikzpicture}
    \caption{Illustration of the Map-Reduce collaborative-computing model.
    The computing tasks $\{\phi(d_n, w)\}_{n=1}^N$ are shared across $N$
    devices. During the Map phase, each device $n$ produces intermediate
    computation results $\{g_n(d_m, w_n)\}_{m=1}^N$ (see framed columns).
    Next, during the Shuffle phase, the intermediate computation results in bold
    on the figure are transmitted via the AP to the devices for which they
    have been computed. The AP is said to \textit{facilitate} the communications.
    Finally, during the Reduce phase, each device $m$ combines the
    intermediate values $\{g_n(d_m, w_n)\}_{n=1}^N$ to obtain
    $\phi(d_m, w)$ (see colored rows).}
    \label{fig:map-reduce}
\end{figure}

We note $t^\text{MAP}_n$, $t^\text{SHU}_n$ and $t^\text{RED}_n$ the amount
of time needed to perform the Map, Shuffle and Reduce phases, respectively,
at device $n$. Because the Map and Shuffle phases must be over at every
device before the Reduce phase starts (as all the intermediate computation
results need to be available), we have the following constraint
\begin{equation}
    t^\text{MAP}_n + t^\text{SHU}_n \le \tau - \max_n\{t^\text{RED}_n\},
    \qquad n \in [N].
    \label{const:time1}
\end{equation}

\subsection{Local computing model}
\label{sec:local-comp-model}
During the Map phase, each device $n$ receives $l_n$ bits to process.
The number of CPU cycles needed to process one bit of input data at
device $n$ is assumed to be given by a constant $c_n$.
At the opposite of our previous work~\cite{me}, devices are now assumed to
be able to perform dynamic frequency scaling (DFS), i.e., a device
can adjust its CPU frequency on the fly depending on the needs.
Then, noting $\kappa_n$ the effective capacitance coefficient (that depends
on the chip architecture of each device), the energy needed for computation
during the Map phase can be modeled as~\cite{coop-mec, mec-survey2, mec-survey3}
\begin{equation}
    E^\text{MAP}_n = \frac{\kappa_nc^3_nl^3_n}{(t^\text{MAP}_n)^2},
    \qquad n \in [N]
    \label{eq:Emap}
\end{equation}
with the following constraint
\begin{equation}
    c_nl_n \le t^\text{MAP}_nf^\text{max}_n,
    \qquad n \in [N]
    \label{const:tmap}
\end{equation}
where $f^\text{max}_n$ is the maximum CPU frequency of device $n$. 
Motivated by the fact that $D \ll L$ and to avoid integer variables
in our optimization problem later on, the energy and time to
process local input data $\{d_n\}_{n=1}^N$ during the Map phase
have been neglected in both~\eqref{eq:Emap} and~\eqref{const:tmap}.

Similarly, the energy needed at device $n$ to combine the $\beta L$
bits of intermediate computation results during the Reduce phase can
be modeled as
\begin{equation}
    E^\text{RED}_n = \frac{\kappa_nc^3_n(\beta L)^3}{(t^\text{RED}_n)^2},
    \qquad n \in [N]
    \label{eq:Ered}
\end{equation}
with the following constraint
\begin{equation}
    c_n\beta L \le t^\text{RED}_nf^\text{max}_n,
    \qquad n \in [N].
    \label{const:tred1}
\end{equation}
Because increasing $t^\text{RED}_n$ is always favorable for energy-efficiency
and because the Reduce phase cannot start before the Map and Shuffle phases
are over, one can see that we will always have the same $t^\text{RED}_n =
t^\text{RED}$ across all $N$ devices. As a consequence,
constraint~\eqref{const:tred1} becomes
\begin{equation}
    \beta L\max_n\left\{\frac{c_n}{f^\text{max}_n}\right\} \le t^\text{RED}
    \label{const:tred2}
\end{equation}
while constraint~\eqref{const:time1} becomes
\begin{equation}
    t^\text{MAP}_n + t^\text{SHU}_n \le \tau - t^\text{RED},
    \qquad n \in [N].
    \label{const:time2}
\end{equation}

\subsection{Communications from the mobile devices to the AP}
\label{sec:comm-model}
During the Shuffle phase, devices exchange intermediate computation results
through the AP\@. This exchange thus involves both an uplink communication
(devices to AP) and a downlink communication (AP to devices).
In most applications however, it is reasonable to assume that the downlink
rates are much larger than the uplink rates. For this reason, and because we
are primarily interested by the energy consumed by the resource-constrained
devices, we neglect the time needed for the downlink communications in
this work. 

We assume that all the devices can communicate in an orthogonal manner to
the AP (e.g., through frequency division multiple access techniques, or
through interference alignment~\cite{ia-shuffle}).
Let $h_n$ denote the wireless channel power gain from device $n$
to the AP during the Shuffle phase,
$p_n$ the RF transmit power of device $n$, $B$ the communication
bandwidth and $N_0$ the noise power spectral density at the AP.
The achievable uplink rate of device $n$ is then given
by\footnote{The noise power $N_0B$ can be multiplied by the SNR gap
$\Gamma$ to account for practical modulation and coding schemes.
This additional factor is left out here for the sake of clarity.}
\begin{equation*}
    \textstyle
    r_n(p_n) = B\ln(1 + \frac{p_nh_n}{N_0B})
\end{equation*}
in nats/second\footnote{Nats and bits are used interchangeably in this paper
(with the proper factor correction applied when needed) to avoid carrying
$\ln(2)$ factors in the derivations later on.}.
Noting $P^c_n$ the constant energy consumption of the communication circuits
at device $n$, the energy consumed during the Shuffle phase is thus given by
\begin{equation}
    \textstyle
    E^{\text{SHU}}_n = t^\text{SHU}_n(p_n + P^c_n)
    \label{eq:Eshu}
\end{equation}
with the following constraints
\begin{equation}
    \textstyle
    \alpha l_n \le t^\text{SHU}_nr_n(p_n),
    \qquad n \in [N]
    \label{const:tshu}
\end{equation}
and
\begin{equation}
    \textstyle
    p_n \le p_n^\text{max},
    \qquad n \in [N]
    \label{const:pmax}
\end{equation}
where $p_n^\text{max}$ is the maximum RF transmit power at device $n$.

\section{Problem formulation}
\label{sec:problem-formulation}
Putting everything together, the energy-efficient collaborative fog computing
problem can be formulated as follows
\begin{equation*}
    \begin{aligned}
        (\text{P1}):
        & \underset{\vl, \vt^\text{MAP}, \vt^\text{SHU}, t^\text{RED}, \vp}{\text{minimize}}
        && \sum_{n=1}^N E^\text{MAP}_n + E^\text{SHU}_n + E^\text{RED}_n \\
        & ~~~\text{subject to}
        && \eqref{const:tot-dist},~\eqref{const:tmap},~\eqref{const:tred2},
        ~\eqref{const:time2},~\eqref{const:tshu},~\eqref{const:pmax}\\
        &&& \specialcell{%
            l_n, t^\text{MAP}_n, t^\text{SHU}_n, p_n, t^\text{RED} \geq 0, \hfill~~n \in [N]} \\
    \end{aligned}
\end{equation*}
where $\vl, \vt^\text{MAP}, \vt^\text{SHU}$ and $\vp$ are $N$-length vectors containing
the corresponding variables.
Interestingly, this problem can be reformulated as follows: how can we
send a total of $L$ bits at a given rate of $L/\tau$ bits/second through $N$ parallel
special channels consisting of a ``computing channel'' in series with a wireless communication
channel in the most energy-efficient way? This is illustrated in Fig.~\ref{fig:inter}.
This interpretation was already mentioned in~\cite{coop-comp4} for a single channel
(i.e., for $N = 1$) and is here generalized for multiple parallel channels.

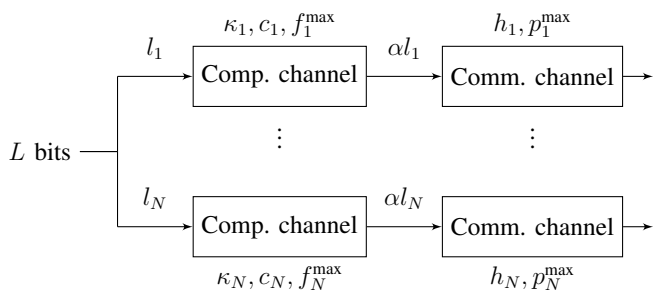
\begin{figure}[h]
    \centering
    \begin{tikzpicture}[scale=0.8, every node/.style={scale=0.8}, >=latex']
    \tikzset{block/.style= {draw, rectangle, align=center,minimum width=2cm,minimum height=1cm},
    }
    \node (start) {$L$ bits};

    \node [coordinate, right = 0.5cm of start] (ADL){};
    \node [coordinate, above = 1cm of ADL] (AUL){};
    \node [coordinate, right = 0.5cm of start] (BUL){};
    \node [coordinate, below = 1cm of BUL] (BDL){};

    \node [block, right = 1cm of AUL, label=above:{$\kappa_1, c_1, f^\text{max}_1$}]
    (A1){Comp. channel};
    \node [block, right = 1cm of BDL, label=below:{$\kappa_N, c_N, f^\text{max}_N$}]
    (B1){Comp. channel};

    \node [block, right = 1cm of A1, label=above:{$h_1, p^\text{max}_1$}]
    (A2){Comm. channel};
    \node [block, right = 1cm of B1, label=below:{$h_N, p^\text{max}_N$}]
    (B2){Comm. channel};

    \node [coordinate, right = 0.4cm of A2] (AUR){};
    \node [coordinate, right = 0.4cm of B2] (BDR){};

    \node [below = 0cm of A1] (etc1) {$\vdots$};
    \node [below = 0cm of A2] (etc2) {$\vdots$};

    \draw (start) -- (ADL);
    \draw (ADL) -- (AUL);
    \draw[->] (AUL) -- (A1) node[midway, label=above:{$l_1$}] {};
    \draw[->] (A1) -- (A2) node[midway, label=above:{$\alpha l_1$}] {};
    \draw[->] (A2) -- (AUR);

    \draw (start) -- (BUL);
    \draw (BUL) -- (BDL);
    \draw[->] (BDL) -- (B1) node[midway, label=above:{$l_N$}] {};
    \draw[->] (B1) -- (B2) node[midway, label=above:{$\alpha l_N$}] {};
    \draw[->] (B2) -- (BDR);
\end{tikzpicture}
    \caption{Another interpretation of the energy-efficient collaborative
    fog computing problem: how can we send a total of $L$ bits at a given
    rate of $L/\tau$ bits/second through $N$ parallel special channels consisting
    of a ``computing channel'' in series with a wireless communication channel in the
    most energy-efficient way?}
    \label{fig:inter}
\end{figure}

\subsection{Feasibility}
Before solving Problem~(P1), we first seek to determine its feasibility
condition, i.e., condition that ensures that the system is able to meet
the deadline.

\begin{lemma}[Feasibility]
    \label{lemma:feas}
    Problem (P1) is feasible if and only if the task size $L$ satisfies
    \[
        L \le L_\text{max} =
        \sum_{n=1}^N \frac{\tau - \beta L\max_n\left\{\frac{c_n}{f^\text{max}_n}\right\}}{%
        1 + \frac{\alpha f^\text{max}_n/c_n}{r_n(p^\text{max}_n)}} \frac{f^\text{max}_n}{c_n}. 
    \]
\end{lemma}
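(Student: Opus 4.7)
The plan is to prove both directions by reducing the feasibility question to the ``tightest'' configuration in which each device operates at its maximum CPU frequency and its maximum RF transmit power. All the machinery needed is already encoded in the per-phase constraints \eqref{const:tmap},~\eqref{const:tred2},~\eqref{const:tshu},~\eqref{const:pmax}, coupled through the time budget \eqref{const:time2} and the load-conservation equality \eqref{const:tot-dist}.

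For necessity, I would start from any feasible tuple $(\vl,\vt^{\text{MAP}},\vt^{\text{SHU}},t^{\text{RED}},\vp)$ and distill a per-device inequality on $l_n$. Concretely, \eqref{const:tred2} gives $t^{\text{RED}} \ge \beta L \max_n\{c_n/f^\text{max}_n\}$, \eqref{const:tmap} gives $t^{\text{MAP}}_n \ge c_n l_n/f^\text{max}_n$, and \eqref{const:tshu} combined with the monotonicity of $r_n(\cdot)$ together with \eqref{const:pmax} gives $t^{\text{SHU}}_n \ge \alpha l_n/r_n(p^\text{max}_n)$. Plugging the three bounds into \eqref{const:time2} yields, for each $n \in [N]$,
\[
    l_n\!\left(\frac{c_n}{f^\text{max}_n} + \frac{\alpha}{r_n(p^\text{max}_n)}\right) \le \tau - \beta L \max_n\!\left\{\frac{c_n}{f^\text{max}_n}\right\}.
\]
Isolating $l_n$, summing over $n$, applying \eqref{const:tot-dist} to replace $\sum_n l_n$ by $L$, and using the identity $(c_n/f^\text{max}_n + \alpha/r_n(p^\text{max}_n))^{-1} = (f^\text{max}_n/c_n)\bigl/\bigl(1 + (\alpha f^\text{max}_n/c_n)/r_n(p^\text{max}_n)\bigr)$ produces exactly the stated bound.

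For sufficiency, I would exhibit an explicit feasible point whenever the hypothesis holds. Fix $p_n = p^\text{max}_n$ and $t^{\text{RED}} = \beta L \max_n\{c_n/f^\text{max}_n\}$, let $\bar l_n$ denote the per-device upper bound derived above, and set $l_n = (L/\sum_m \bar l_m)\,\bar l_n$. The hypothesis $L \le L_\text{max} = \sum_n \bar l_n$ guarantees $l_n \le \bar l_n$ and $\sum_n l_n = L$. Choosing $t^{\text{MAP}}_n = c_n l_n/f^\text{max}_n$ and $t^{\text{SHU}}_n = \alpha l_n/r_n(p^\text{max}_n)$ then makes \eqref{const:tmap} and \eqref{const:tshu} tight, reduces \eqref{const:time2} to $l_n \le \bar l_n$ (which holds by construction), and leaves \eqref{const:tred2}, \eqref{const:pmax} and the nonnegativity constraints trivially satisfied.

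I do not anticipate a serious obstacle. The only mildly awkward point is that $L$ appears on both sides of the bound through the $\beta L \max_n\{c_n/f^\text{max}_n\}$ term in the numerator; this is just a notational artifact of keeping $t^{\text{RED}}$ as an additive offset in the time budget, and the inequality can be solved for $L$ in closed form if an explicit bound is preferred. The rest of the proof is bookkeeping: collapse the three per-phase inequalities to their tightest form, subtract the Reduce time from the deadline, and aggregate across devices via \eqref{const:tot-dist}.
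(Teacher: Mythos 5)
Your proof is correct and rests on exactly the same observation as the paper's: the binding configuration is maximum CPU frequency, maximum RF transmit power, minimal Reduce time and a fully used time budget, which yields the same per-device cap $\bar l_n = \bigl(\tau - \beta L\max_n\{c_n/f^\text{max}_n\}\bigr)\big/\bigl(c_n/f^\text{max}_n + \alpha/r_n(p^\text{max}_n)\bigr)$ that the paper obtains by making \eqref{const:tmap}, \eqref{const:tshu}, \eqref{const:time2} and \eqref{const:tred2} tight in the max-load problem. If anything, your write-up is slightly more complete, since you establish both directions of the equivalence explicitly (necessity by chaining the per-phase lower bounds, sufficiency by an explicit feasible point), whereas the paper only computes the optimum of the capacity-maximization problem and leaves the ``only if'' direction implicit.
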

\begin{proof}
The maximum computing capacity of the system $L_\text{max}$ is obtained by solving
the following optimization problem
\begin{equation*}
    \begin{aligned}
        & L_\text{max} \eqdef
        \underset{\vl, \vt^\text{MAP}, \vt^\text{SHU}, t^\text{RED}, \vp}{\text{maximize}}
        && \sum_{n=1}^N l_n \\
        & ~~~~~~~~~~~~\text{subject to}
        && \eqref{const:tmap},~\eqref{const:tred2},
        ~\eqref{const:time2},~\eqref{const:tshu},~\eqref{const:pmax}\\
        &&& l_n, t^\text{MAP}_n, t^\text{SHU}_n, p_n, t^\text{RED} \geq 0 \qquad\forall n. \\
    \end{aligned}
\end{equation*}
For the maximum computing capacity to be achieved, constraints~\eqref{const:time2},
~\eqref{const:pmax} and~\eqref{const:tred2} must be met, that is,
the entire time $\tau$ is used by all devices, all devices transmit at their
maximum RF transmit power $p^\text{max}$ and the Reduce phase executes as fast as possible. 
Next, the two constraints~\eqref{const:tmap} and~\eqref{const:tshu} on $l_n$ can be
re-written in a single constraint as follows
\[
    l_n \le \min\left\{\frac{t^\text{MAP}_nf^\text{max}_n}{c_n}, \frac{1}{\alpha}
    t^\text{SHU}_nr_n(p^\text{max}_n)\right\}.
\]
At the optimum, this constraint is satisfied and, given the relationship between
$t^\text{MAP}_n$ and $t^\text{SHU}_n$, we have
\[
    \alpha \frac{t^\text{MAP}_nf^\text{max}_n}{c_n}
    = t^\text{SHU}_nr_n(p^\text{max}_n)
\]
which intuitively means that the number of bits of intermediate values
produced by the Map phase at full speed must equal the number
of bits that can be transmitted at full speed during the Shuffle phase. 
Then using the satisfied constraints~\eqref{const:time2} and~\eqref{const:tred2}
together, we have
\[
    t^\text{MAP}_n + t^\text{SHU}_n =
    \tau - \beta L\max_n\left\{\frac{c_n}{f^\text{max}_n}\right\},
\]
which allows us to finally obtain
\begin{equation}
    t^\text{MAP}_n = \frac{\tau - \beta L\max_n\left\{\frac{c_n}{f^\text{max}_n}\right\}}{%
        1 + \frac{\alpha f^\text{max}_n/c_n}{r_n(p^\text{max}_n)}}    
        \qquad \text{and}
        \qquad
    t^\text{SHU}_n = \frac{\tau - \beta L\max_n\left\{\frac{c_n}{f^\text{max}_n}\right\}}{%
        1 + \frac{r_n(p^\text{max}_n)}{\alpha f^\text{max}_n/c_n}}. 
    \label{eq:tmap-max}
\end{equation}
The maximum computing load $L_\text{max}$ is thus given by
\[
    L_\text{max} = \sum_{n=1}^N \frac{\tau - \beta L\max_n\left\{\frac{c_n}{f^\text{max}_n}\right\}}{%
        1 + \frac{\alpha f^\text{max}_n/c_n}{r_n(p^\text{max}_n)}} \frac{f^\text{max}_n}{c_n}.
\]
\end{proof}
We see through~\eqref{eq:tmap-max} that, at full capacity,
the time for the Map and Shuffle phases is shared according to the ratio of
\begin{enumerate*}[label=(\roman*)]
    \item the maximum rate at which the Map phase can produce intermediate computation
    results $\alpha f^\text{max}_n/c_n$, and
    \item the maximum rate at which the Shuffle phase can transmit intermediate
    computation results $r_n(p^\text{max}_n)$.
\end{enumerate*}
At lower than full capacity, these time intervals will be able to adjust taking into
account the energy-efficiency of both phases.

\section{Optimal solution}
\label{sec:optimal-sol}
Inspired by~\cite{coop-mec}, we then introduce a new set of variables $E_n = t^\text{SHU}_np_n$,
i.e., the RF energy consumed by the Shuffle phase, and substitute $p_n$ for $E_n/t^\text{SHU}_n$
to convexify Problem~(P1).
With this new variable, constraints~\eqref{const:tshu} and~\eqref{const:pmax} can be
re-written as
\begin{equation}
    \alpha l_n \le t^\text{SHU}_nr_n\left(\frac{E_n}{t^\text{SHU}_n}\right)
    \label{const:tshu2}
\end{equation}
and
\begin{equation}
    E_n \le t^\text{SHU}_np^\text{max}_n
    \label{const:pmax2}
\end{equation}
respectively, for all $n \in [N]$. Problem (P1) thus becomes 
\begin{equation*}
    \begin{aligned}
        (\text{P2}):
        & \underset{\vl, \vt^\text{MAP}, \vt^\text{SHU}, t^\text{RED}, \vE}{\text{minimize}}
        && \textstyle{\sum_{n=1}^N \frac{\kappa_nc_n^3l_n^3}{(t^\text{MAP}_n)^2}
        + E_n + t^\text{SHU}_nP^c_n + \frac{\kappa_nc_n^3(\beta L)^3}{(t^\text{RED})^2}} \\
        & ~~~\text{subject to}
        && \eqref{const:tot-dist},~\eqref{const:tmap},~\eqref{const:tred2},
        ~\eqref{const:time2},~\eqref{const:tshu2},~\eqref{const:pmax2} \\
        &&& \specialcell{%
            l_n, t^\text{MAP}_n, t^\text{SHU}_n, E_n, t^\text{RED} \geq 0 \hfill n \in [N].} \\
    \end{aligned}
\end{equation*}
We now prove the convexity of this new formulation.
\begin{lemma}[Convexity]
    Problem (P2) is convex.
    \label{lemma:conv}
\end{lemma}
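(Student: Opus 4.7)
The plan is to verify convexity of (P2) term by term for the objective and constraint by constraint for the feasible set, exploiting the fact that nearly every nonlinearity was designed (via the substitution $E_n = t^{\text{SHU}}_n p_n$) to become a perspective of a familiar convex or concave function.

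First I would handle the objective. The contributions $E_n$ and $P^c_n\, t^{\text{SHU}}_n$ are linear, and the Reduce-phase term $\kappa_n c_n^3 (\beta L)^3 / (t^{\text{RED}})^2$ is a positive constant divided by the square of a positive variable, hence convex on $\{t^{\text{RED}} > 0\}$. For the Map-phase term I would rewrite
\[
    \frac{\kappa_n c_n^3 l_n^3}{(t^{\text{MAP}}_n)^2}
    = t^{\text{MAP}}_n \cdot \kappa_n c_n^3 \left(\frac{l_n}{t^{\text{MAP}}_n}\right)^3,
\]
which exhibits it as the perspective of the scalar convex function $z \mapsto \kappa_n c_n^3 z^3$ on $\mathbb{R}_+$; since the perspective operation preserves convexity, this yields joint convexity in $(l_n, t^{\text{MAP}}_n)$ on $\mathbb{R}_+ \times \mathbb{R}_{++}$. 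Summing preserves convexity, so the objective is convex.

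Next I would examine the constraints one at a time. Constraints \eqref{const:tot-dist}, \eqref{const:tmap}, \eqref{const:tred2}, \eqref{const:time2}, \eqref{const:pmax2}, together with the nonnegativity bounds, are all affine in the optimization variables, hence each defines a polyhedral (in particular convex) set. The only constraint requiring a genuine argument is the reformulated Shuffle constraint \eqref{const:tshu2}, namely $\alpha l_n \le t^{\text{SHU}}_n\, r_n(E_n/t^{\text{SHU}}_n)$. Since the left-hand side is linear, it suffices to show that
\[
    g_n(E_n, t^{\text{SHU}}_n) \eqdef t^{\text{SHU}}_n B \ln\!\left(1 + \frac{E_n h_n}{N_0 B\, t^{\text{SHU}}_n}\right)
\]
is jointly concave in $(E_n, t^{\text{SHU}}_n)$ on $\mathbb{R}_+ \times \mathbb{R}_{++}$. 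I would do this by recognizing $g_n$ as the perspective of the scalar concave function $p \mapsto r_n(p) = B \ln(1 + p h_n/(N_0 B))$ and invoking the classical fact that the perspective transform preserves concavity. Consequently the sublevel set $\{(l_n, E_n, t^{\text{SHU}}_n) : \alpha l_n - g_n(E_n, t^{\text{SHU}}_n) \le 0\}$ is convex.

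Combining a convex objective with an intersection of convex sets gives a convex program, which is the claim. The main conceptual hurdle is the Shuffle constraint; the key trick is the perspective transformation, which is also precisely why the substitution $E_n = t^{\text{SHU}}_n p_n$ was introduced when moving from (P1) to (P2). Once this observation is made and the Map-phase energy term is likewise identified as a perspective, the remainder of the verification reduces to recognizing standard convex atoms.
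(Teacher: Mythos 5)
Your proposal is correct and follows essentially the same route as the paper: the Map-phase energy term is identified as the perspective of the convex cubic (hence jointly convex in $(l_n, t^\text{MAP}_n)$), the Shuffle rate constraint is handled via the concavity-preserving perspective of $r_n$, and all remaining terms and constraints are disposed of as linear, affine, or elementary convex atoms. No gaps; nothing further is needed.
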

\begin{proof}
As this is a minimization problem, we start by showing the convexity of the objective function.
The function $x^3$ is a convex function for $x \ge 0$. Its perspective function, $x^3/y^2$,
is thus also a convex function for $y > 0$. The term associated to the energy consumed by
the Map phase is thus jointly convex with respect to $l_n \ge 0$ and $t^\text{MAP}_n > 0$.
Next, the terms associated to the energy consumed by the Shuffle phase are linear and hence
convex by definition. Finally, the function $1/x^2$ is a convex function with respect to
$x > 0$ which makes the term associated to the energy consumed by the Reduce phase a convex
function as well. As convexity is preserved under addition, the objective function of Problem
(P2) is a convex function.
We then show that the set defined by the constraints is a convex set.
The equality constraint~\eqref{const:tot-dist} is affine and thus defines a hyperplane.
Next, inequalities~\eqref{const:tmap},~\eqref{const:tred2},~\eqref{const:time2},
and~\eqref{const:pmax2} are either linear or affine and thus define a polyhedron.
The only remaining constraint (omitting trivial positivity constraints on all variables)
is then constraint~\eqref{const:tshu2}. For constraint~\eqref{const:tshu2} to define
a convex set, its right-hand side term must be a concave function. The function $r_n(x)$ is
a concave function with respect to $x \ge 0$. Its perspective function, $yr_n(x/y)$ is
thus also a concave function with respect to $x \ge 0$ and $y > 0$. Because the intersection
of a hyperplane, a polyhedron and a convex sublevel set remains a 
convex set, the set defined by the constraints of Problem (P2) is also convex.
\end{proof}

Problem (P2) can easily be solved using a software for convex optimization like
\texttt{cvxopt}~\cite{cvxopt}. This wouldn't however offer any interpretation of the results.
To this effect, we seek to gain some insights into the optimal solution to Problem (P2)
mathematically using Lagrange duality theory.

We thus let $\lambda \in \R$, $\beta_n \ge 0$, $\mu_n \ge 0$ be the Lagrange multipliers
associated with constraints~\eqref{const:tot-dist},~\eqref{const:time2} and~\eqref{const:tshu2} 
respectively.
The partial Lagrangian is then given by
\begin{equation*}
    \begin{split}
    \mathcal{L}(\vx, \vmu, \vbeta, \lambda) &= 
    \textstyle{\sum_{n=1}^N \frac{\kappa_nc_n^3l_n^3}{(t^\text{MAP}_n)^2}
        + E_n + t^\text{SHU}_nP^c_n + \frac{\kappa_nc_n^3(\beta L)^3}{(t^\text{RED})^2}} \\ 
    & \textstyle{+ \sum_{n=1}^N
      \mu_n\left(\alpha l_n - t^\text{SHU}_nr_n\left(\frac{E_n}{t^\text{SHU}_n}\right)\right)} \\
    & \textstyle{+ \beta_n(t^\text{MAP}_n + t^\text{SHU}_n + t^\text{RED} - \tau)} \\
    & \textstyle{+ \lambda\left(L - \sum_{n=1}^N l_n\right)}
    \end{split}
\end{equation*}
where optimization variables and Lagrange multipliers have been aggregated in the
corresponding vectors to ease notations.
The dual function is then given by
\begin{equation*}
    \begin{aligned}
        (\text{DF}):~
        & g(\vmu, \vbeta, \lambda) =
        \underset{\vx}{\text{min}}
        && \mathcal{L}(\vx, \vmu, \vbeta, \lambda) \\
        & ~~~~~~~~~~~~~~~~~\text{s.t.}
        && \eqref{const:tmap},~\eqref{const:tred2},~\eqref{const:pmax2} \\
        &&& \specialcell{%
            0 \le t^\text{MAP}_n, t^\text{SHU}_n, t^\text{RED} \le \tau~n \in [N],} \\
        &&& \specialcell{%
            l_n, E_n \geq 0 \hfill n \in [N].}
    \end{aligned}
\end{equation*}
As the dual function provides a lower bound to the optimal value of the primal problem,
we then seek to maximize it to obtain the best possible lower bound. The dual problem
is given by
\begin{equation*}
    \begin{aligned}
        (\text{D1}):~
        & \underset{\vmu, \vbeta, \lambda}{\text{maximize}}
        && g(\vmu, \vbeta, \lambda) \\ 
        & \text{subject to}
        && \mu_n, \beta_n \ge 0, \qquad n \in [N].
    \end{aligned}
\end{equation*}

Problem (P2) is convex (Lemma~\ref{lemma:conv}) and satisfies Slater's condition if
it is strictly feasible (in the sense given in Lemma~\ref{lemma:feas}).
Strong duality thus holds and Problem (P2) can be solved by solving the dual problem (D1).

\subsection{Derivation of the dual function}
Before solving the dual problem (D1), we seek to evaluate the dual function $g(\vmu,
\vbeta, \vlambda)$ for all $\vmu, \vbeta, \vlambda$ by solving Problem (DF). To
this effect, we first decompose Problem (DF) in $2N + 1$ sub-problems as follows
\begin{equation}
    \begin{aligned}
        & \underset{l_n, t^\text{MAP}_n}{\text{minimize}}
        && \frac{\kappa_nc_n^3l_n^3}{(t^\text{MAP}_n)^2} + (\alpha\mu_n - \lambda)l_n
        + \beta_n t^\text{MAP}_n \\ 
        & \text{subject to}
        && 0 \le l_n \le t^\text{MAP}_nf^\text{max}_n/c_n \\
        &&& t^\text{MAP}_n \le \tau
    \end{aligned}
    \label{prob:map}
\end{equation}

\begin{equation}
    \begin{aligned}
        & \underset{E_n, t^\text{SHU}_n}{\text{minimize}}
        && \textstyle{E_n + (P^c_n + \beta_n)t^\text{SHU}_n - \mu_n t^\text{SHU}_n r_n\left(
        \frac{E_n}{t^\text{SHU}_n}\right)}\\ 
        & \text{subject to}
        && 0 \le E_n \le t^\text{SHU}_np^\text{max}_n \\
        &&& t^\text{SHU}_n \le \tau
    \end{aligned}
    \label{prob:shu}
\end{equation}

\begin{equation}
    \begin{aligned}
        & \underset{t^\text{RED}}{\text{minimize}}
        && \textstyle{%
            \sum_{n=1}^N \frac{\kappa_n c^3_n (\beta L)^3}{(t^\text{RED})^2} + \beta_n t^\text{RED}} \\ 
        & \text{subject to}
        && \textstyle{%
            \beta L\max_n\left\{\frac{c_n}{f^\text{max}_n}\right\} \le t^\text{RED} \le \tau}. \\
    \end{aligned}
    \label{prob:red}
\end{equation}

It is interesting to note that Problems~\eqref{prob:map} and~\eqref{prob:shu} correspond
to the Map and Shuffle phases at device $n$ respectively while Problem~\eqref{prob:red}
corresponds to the Reduce phase.

\begin{lemma}[Solution of Problem~\eqref{prob:map}]
    \label{lemma:map}
    For any $\mu_n, \beta_n \ge 0$ and $\lambda \in \R$, the optimal solution of
    Problem~\eqref{prob:map} satisfies
    \begin{equation}
        l^*_n = M^*_nt_n^{\text{MAP}*}
        \label{sol:map-l}
    \end{equation}
    with $M^*_n$, the effective processing rate (in bits/second) of device $n$
    defined as
    \begin{equation}
        M^*_n \eqdef
        \begin{cases}
            0 & \lambda - \alpha\mu_n \le 0 \\
            \sqrt{\frac{\lambda - \alpha\mu_n}{3\kappa_nc^3_n}}
            & \lambda - \alpha\mu_n \in \left]0, 3\kappa_nc_n(f^\text{max}_n)^2\right[ \\
            \frac{f^\text{max}_n}{c_n} 
            & \lambda - \alpha\mu_n \ge 3\kappa_nc_n(f^\text{max}_n)^2
        \end{cases}
        \label{sol:map-M}
    \end{equation}
    and $t_n^{\text{MAP}*}$ given by
    \begin{equation}
        t_n^{\text{MAP}*}
        \begin{cases}
            = 0 & \rho_{1,n} < 0 \\
            \in [0, \tau]
            & \rho_{1,n} = 0 \\
            = \tau
            & \rho_{1,n} > 0
        \end{cases}
        \label{sol:map-t}
    \end{equation}
    with $\rho_{1,n} =
    2\kappa_n(c_nM^*_n)^3 - \beta_n + \gamma_{2,n}\frac{f^\text{max}_n}{c_n}$ and
    \begin{equation}
        \gamma_{2,n} =
        \begin{cases}
            0 & M^*_n < \frac{f^\text{max}_n}{c_n} \\
            \lambda - \alpha\mu_n - 3\kappa_nc_n(f^\text{max}_n)^2
            & M^*_n = \frac{f^\text{max}_n}{c_n}. 
        \end{cases}
        \label{sol:map-gamma}
    \end{equation}
\end{lemma}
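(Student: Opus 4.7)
The plan is to exploit a structural decoupling obtained by introducing the effective processing rate $M_n = l_n/t^{\text{MAP}}_n$. This change of variables is natural because $l_n/t^{\text{MAP}}_n$ is the only homogeneous combination of the two variables controlling both the Map-energy term and the CPU-speed constraint. Substituting $l_n = M_n t^{\text{MAP}}_n$ turns the objective of Problem~\eqref{prob:map} into the product $t^{\text{MAP}}_n \cdot h_n(M_n)$ with
\[
    h_n(M_n) \eqdef \kappa_n c_n^3 M_n^3 + (\alpha\mu_n - \lambda) M_n + \beta_n,
\]
while the constraint $c_n l_n \le t^{\text{MAP}}_n f^{\text{max}}_n$ collapses to the box $0 \le M_n \le f^{\text{max}}_n/c_n$, which no longer involves $t^{\text{MAP}}_n$. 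The problem thus splits into an inner rate-selection problem over $M_n$ and an outer scalar decision on the duration $t^{\text{MAP}}_n \in [0, \tau]$.

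For the inner subproblem I would minimise the strictly convex cubic $h_n$ on $[0, f^{\text{max}}_n/c_n]$ by inspecting the sign of its monotone derivative $h_n'(M_n) = 3\kappa_n c_n^3 M_n^2 + (\alpha\mu_n - \lambda)$. This immediately yields the three regimes according to where the unconstrained stationary point $\sqrt{(\lambda - \alpha\mu_n)/(3\kappa_n c_n^3)}$ lies, exactly as stated in~\eqref{sol:map-M}; the threshold $\lambda - \alpha\mu_n = 3\kappa_n c_n (f^{\text{max}}_n)^2$ is obtained by equating that stationary point with $f^{\text{max}}_n/c_n$. Reading $\gamma_{2,n}$ as the KKT multiplier attached to the upper bound $M_n \le f^{\text{max}}_n/c_n$, the stationarity condition $h_n'(M_n^*) + \gamma_{2,n} = 0$ then gives~\eqref{sol:map-gamma}.

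For the outer problem I would substitute $M_n^*$ back into $h_n$. A short case-by-case computation, collapsing the cubic term via the stationarity relation $3\kappa_n c_n^3 (M_n^*)^2 = \lambda - \alpha\mu_n - \gamma_{2,n}$ in the interior and boundary cases (and trivially in the zero-rate case, where $M_n^* = \gamma_{2,n} = 0$), yields the uniform expression $h_n(M_n^*) = \beta_n - 2\kappa_n(c_n M_n^*)^3 - \gamma_{2,n}\, f^{\text{max}}_n/c_n = -\rho_{1,n}$. The outer objective therefore reduces to the linear function $-\rho_{1,n}\, t^{\text{MAP}}_n$ on $[0, \tau]$, whose minimiser is driven purely by the sign of $\rho_{1,n}$, delivering~\eqref{sol:map-t}; relation~\eqref{sol:map-l} then follows by reversing the substitution.

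The only delicate point is the degenerate boundary $t^{\text{MAP}}_n = 0$, where the change of variable $M_n = l_n/t^{\text{MAP}}_n$ is not well-defined. I would dispose of it by invoking the perspective-function convention already implicit in the proof of Lemma~\ref{lemma:conv}: the Map-energy term $\kappa_n c_n^3 l_n^3/(t^{\text{MAP}}_n)^2$ is assigned value $0$ at $(l_n, t^{\text{MAP}}_n) = (0,0)$ and $+\infty$ whenever $l_n > 0$ and $t^{\text{MAP}}_n = 0$, so any feasible point with $t^{\text{MAP}}_n = 0$ must satisfy $l_n = 0$, in agreement with $l_n^* = M_n^* t_n^{\text{MAP}*}$. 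The decomposition is therefore legitimate on the closed domain $\{t^{\text{MAP}}_n \ge 0,\ l_n \ge 0\}$, and the three-part characterisation of the optimum follows.
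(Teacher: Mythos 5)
Your proof is correct, but it follows a genuinely different route from the paper's. The paper works directly on the two-variable problem~\eqref{prob:map}: it writes the full Lagrangian with four multipliers $\gamma_{1,n},\dots,\gamma_{4,n}$ for the constraints $l_n \ge 0$, $l_n \le t^{\text{MAP}}_n f^{\text{max}}_n/c_n$, $0 \le t^{\text{MAP}}_n \le \tau$, and extracts \eqref{sol:map-l}--\eqref{sol:map-gamma} from the two stationarity conditions plus complementary slackness, defining $\rho_{1,n} = \gamma_{4,n} - \gamma_{3,n}$. You instead substitute $M_n = l_n/t^{\text{MAP}}_n$, which factors the objective as $t^{\text{MAP}}_n\,h_n(M_n)$ and reduces the coupled constraint to a box on $M_n$ alone; the inner problem is then an elementary one-dimensional minimisation of a cubic with increasing derivative, and the outer problem is linear in $t^{\text{MAP}}_n$ with slope $h_n(M_n^*) = -\rho_{1,n}$. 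I checked the algebra: the collapse $h_n(M_n^*) = \beta_n - 2\kappa_n(c_n M_n^*)^3 - \gamma_{2,n} f^{\text{max}}_n/c_n$ via the stationarity relation holds in all three regimes (including $M_n^*=0$, where $h_n(0)=\beta_n=-\rho_{1,n}$), your $\gamma_{2,n}$ coincides numerically with the paper's multiplier, the nested minimisation $\min_t \min_{M} t\,h_n(M)$ is legitimate since $t \ge 0$, and your treatment of the degenerate point $t^{\text{MAP}}_n = 0$ via the perspective convention is the right way to close the argument. What your approach buys is transparency: it explains \emph{why} the optimum has the rate-times-duration product structure of \eqref{sol:map-l} and gives $\rho_{1,n}$ a direct meaning as the negative of the per-second cost of running the Map phase at the optimal rate, so the bang-bang rule \eqref{sol:map-t} becomes obvious. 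What the paper's mechanical KKT derivation buys is that $\gamma_{2,n}$ and $\rho_{1,n}$ arise as genuine multipliers of the original problem, which keeps the notation aligned with the dual machinery used elsewhere in Section~\ref{sec:optimal-sol}. One cosmetic quibble: $h_n$ is convex but not \emph{strictly} convex on $[0,\infty)$ (its second derivative vanishes at $M_n=0$); what you actually need, and have, is that $h_n'$ is nondecreasing, so the three-regime case analysis is unaffected.
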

\begin{proof}
    See Appendix~\ref{app:prob-map}.
\end{proof}

\begin{lemma}[Solution of Problem~\eqref{prob:shu}]
    \label{lemma:shu}
    For any $\mu_n, \beta_n \ge 0$, the optimal solution of
    Problem~\eqref{prob:shu} satisfies
    \begin{equation}
        E^*_n = p^*_nt_n^{\text{SHU}*}
        \label{sol:shu-E}
    \end{equation}
    with $p^*_n$, the RF transmit power used during the Shuffle phase at device $n$
    defined as
    \begin{equation}
        p^*_n \eqdef
        \begin{cases}
            0 & B\mu_n \le \frac{BN_0}{h_n} \\
            B\left(\mu_n - \frac{N_0}{h_n}\right)
            & 
            B\mu_n \in \left]\frac{BN_0}{h_n}, \frac{BN_0}{h_n} + p^\text{max}_n\right[ \\
            p^\text{max}_n 
            & B\mu_n \ge \frac{BN_0}{h_n} + p^\text{max}_n
        \end{cases}
        \label{sol:shu-p}
    \end{equation}
    and $t_n^{\text{SHU}*}$ given by
    \begin{equation}
        t_n^{\text{SHU}*}
        \begin{cases}
            = 0 & \rho_{2,n} < 0 \\
            \in [0, \tau]
            & \rho_{2,n} = 0 \\
            = \tau
            & \rho_{2,n} > 0
        \end{cases}
        \label{sol:shu-t}
    \end{equation}
    with $\rho_{2,n} = \mu_nr_n(p^*_n) - P^c_n - \beta_n - \mu_n\frac{p^*_n\frac{h_n}{N_0}}
    {1 + p^*_n\frac{h_n}{N_0B}} + \delta_{2,n}p^\text{max}_n$ and
    \begin{equation}
        \delta_{2,n} =
        \begin{cases}
            0 & p^*_n < p^\text{max}_n \\
            \mu_n\frac{\frac{h_n}{N_0}}{1 + p^\text{max}_n\frac{h_n}{N_0B}} - 1
            & p^*_n = p^\text{max}_n.
        \end{cases}
        \label{sol:shu-delta}
    \end{equation}
\end{lemma}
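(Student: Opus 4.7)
The plan is to mirror the KKT approach used for Lemma~\ref{lemma:map}. First, I would observe that Problem~\eqref{prob:shu} is convex: the perspective function $t^\text{SHU}_n r_n(E_n/t^\text{SHU}_n)$ of the concave function $r_n$ is itself concave, so its negation is convex, and the remaining terms in the objective are linear in $(E_n, t^\text{SHU}_n)$. The feasible set is a polyhedron and Slater's condition clearly holds, so the KKT conditions are necessary and sufficient for optimality.

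Next, I would write the Lagrangian of Problem~\eqref{prob:shu} with nonnegative multipliers $\nu_n$ for $E_n \ge 0$, $\delta_{2,n}$ for $E_n \le t^\text{SHU}_n p^\text{max}_n$, $\eta_n$ for $t^\text{SHU}_n \ge 0$, and $\epsilon_n$ for $t^\text{SHU}_n \le \tau$. Denoting $p^*_n \eqdef E^*_n / t^{\text{SHU}*}_n$ when $t^{\text{SHU}*}_n > 0$, the stationarity condition with respect to $E_n$ reads $1 - \mu_n r_n'(p^*_n) - \nu_n + \delta_{2,n} = 0$, where $r_n'(p) = \frac{h_n/N_0}{1 + p\, h_n/(N_0 B)}$. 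Applying complementary slackness produces the three regimes of~\eqref{sol:shu-p}: the interior case ($\nu_n = \delta_{2,n} = 0$) gives $p^*_n = B(\mu_n - N_0/h_n)$; the lower-boundary case ($p^*_n = 0$) requires $\mu_n \le N_0/h_n$; and the upper-boundary case ($p^*_n = p^\text{max}_n$) requires $\mu_n \ge N_0/h_n + p^\text{max}_n/B$, with $\delta_{2,n}$ taking exactly the value in~\eqref{sol:shu-delta}. This also justifies~\eqref{sol:shu-E}.

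The stationarity condition with respect to $t^\text{SHU}_n$ is where $\rho_{2,n}$ originates. Using the chain rule, $\frac{\partial}{\partial t^\text{SHU}_n}\bigl[t^\text{SHU}_n r_n(E_n/t^\text{SHU}_n)\bigr] = r_n(p^*_n) - p^*_n r_n'(p^*_n)$, which contributes the term $\mu_n \frac{p^*_n h_n/N_0}{1 + p^*_n h_n/(N_0 B)}$ that appears inside $\rho_{2,n}$. Collecting all the linear terms in $t^\text{SHU}_n$ yields the identity $\eta_n - \epsilon_n = -\rho_{2,n}$, with $\rho_{2,n}$ exactly as stated in the lemma. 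Complementary slackness on $\eta_n$ and $\epsilon_n$ then delivers the trichotomy~\eqref{sol:shu-t}: $\rho_{2,n} > 0$ forces $\epsilon_n > 0$ and hence $t^{\text{SHU}*}_n = \tau$; $\rho_{2,n} < 0$ forces $\eta_n > 0$ and hence $t^{\text{SHU}*}_n = 0$; while $\rho_{2,n} = 0$ leaves $t^{\text{SHU}*}_n$ free in $[0, \tau]$.

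The main obstacle is careful bookkeeping rather than any deep difficulty. One must correctly differentiate the perspective function (not forget the $-E_n/(t^\text{SHU}_n)^2$ factor from the chain rule, which is precisely what generates the extra $p^*_n r_n'(p^*_n)$ contribution in $\rho_{2,n}$) and handle the boundary $t^{\text{SHU}*}_n = 0$, where the ratio $p^*_n = E^*_n/t^{\text{SHU}*}_n$ is undefined. Fortunately the constraint $E_n \le t^\text{SHU}_n p^\text{max}_n$ forces $E^*_n = 0$ in that case, so the objective vanishes and the boundary is consistent with the $\rho_{2,n} < 0$ branch by a limiting argument; alternatively one can define $p^*_n$ at $t^{\text{SHU}*}_n = 0$ via the waterfilling formula~\eqref{sol:shu-p} and verify continuity directly.
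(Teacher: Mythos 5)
Your proposal is correct and follows essentially the same route as the paper's proof in Appendix~\ref{app:prob-shu}: form the Lagrangian of the convex subproblem~\eqref{prob:shu} with multipliers for the four box constraints, use stationarity in $E_n$ plus complementary slackness to get~\eqref{sol:shu-E}, \eqref{sol:shu-p} and~\eqref{sol:shu-delta}, then use stationarity in $t^\text{SHU}_n$ with $\rho_{2,n}$ identified as the difference of the two multipliers on $t^\text{SHU}_n$ to get~\eqref{sol:shu-t}. Your added remarks on the perspective-function derivative and the degenerate boundary $t^{\text{SHU}*}_n = 0$ are correct and slightly more careful than the paper's write-up, but do not change the argument.
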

\begin{proof}
    See Appendix~\ref{app:prob-shu}.
\end{proof}

\begin{lemma}[Solution of Problem~\eqref{prob:red}]
    \label{lemma:red}
    For any $\beta_1, \dots, \beta_N \ge 0$, the optimal solution of
    Problem~\eqref{prob:red} satisfies
    \begin{equation}
        t^{\text{RED}*} =
        \begin{cases}
            \beta L\max_n\{\frac{c_n}{f^\text{max}_n}\} & 
            \sum_{n=1}^N \beta_n >
            \frac{2\sum_{n=1}^N \kappa_nc^3_n}{%
            \left(\max_n\left\{\frac{c_n}{f^\text{max}_n}\right\}\right)^3} \\
            \beta L\sqrt[3]{\frac{2\sum_{n=1}^N \kappa_nc^3_n}{\sum_{n=1}^N \beta_n}}
            & \sum_{n=1}^N \beta_n \le
            \frac{2\sum_{n=1}^N \kappa_nc^3_n}{%
            \left(\max_n\left\{\frac{c_n}{f^\text{max}_n}\right\}\right)^3}. \\
        \end{cases}
        \label{sol:red}
    \end{equation}
\end{lemma}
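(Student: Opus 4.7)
The plan is to exploit that Problem~\eqref{prob:red} is a one-dimensional strictly convex minimization in $t^{\text{RED}}$ over the compact interval $[t_{\min},\tau]$, where $t_{\min} \eqdef \beta L\max_n\{c_n/f^\text{max}_n\}$. First I would rewrite the objective in the compact form
\[
\varphi(t) \eqdef \frac{A}{t^2} + Bt, \qquad A \eqdef (\beta L)^3 \sum_{n=1}^N \kappa_n c_n^3, \qquad B \eqdef \sum_{n=1}^N \beta_n,
\]
and note that $\varphi$ is strictly convex on $(0,\infty)$, since $1/t^2$ is convex for $t>0$ and $Bt$ is linear. Solving the stationarity condition $\varphi'(t) = -2A/t^3 + B = 0$ then yields the unique unconstrained minimizer
\[
\hat t \eqdef \sqrt[3]{\frac{2A}{B}} = \beta L \sqrt[3]{\frac{2\sum_{n=1}^N \kappa_n c_n^3}{\sum_{n=1}^N \beta_n}},
\]
with $\varphi$ strictly decreasing on $(0,\hat t)$ and strictly increasing on $(\hat t,\infty)$.

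The constrained minimizer is then the projection of $\hat t$ onto $[t_{\min},\tau]$. An elementary manipulation shows that the condition $\hat t < t_{\min}$ is equivalent to
\[
\sum_{n=1}^N \beta_n > \frac{2\sum_{n=1}^N \kappa_n c_n^3}{\left(\max_n\{c_n/f^\text{max}_n\}\right)^3},
\]
which is precisely the first branch of~\eqref{sol:red}; monotonicity of $\varphi$ on $[t_{\min},\infty)$ then forces $t^{\text{RED}*} = t_{\min}$. The complementary non-strict inequality places $\hat t$ in the feasible interval and directly gives $t^{\text{RED}*} = \hat t$, which is the second branch.

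The only subtle point, and the one I expect to be the main obstacle, is ruling out the case $\hat t > \tau$, i.e.\ ensuring that the upper bound $t^{\text{RED}} \le \tau$ remains inactive. I would handle this by attaching a Lagrange multiplier $\eta \ge 0$ to $t^{\text{RED}} \le \tau$ and invoking Lemma~\ref{lemma:feas} together with the KKT conditions of the dual problem~(D1): at any $(\vmu,\vbeta,\lambda)$ relevant to the maximization of the dual function $g$, the joint time-budget constraint~\eqref{const:time2} is tight and $\eta = 0$, which eliminates the upper bound. Once this is settled, the two-branch closed-form expression~\eqref{sol:red} follows directly from the projection argument above.
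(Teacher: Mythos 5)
Your argument is correct and, at its core, coincides with the paper's: the appendix proof of Lemma~\ref{lemma:red} writes the Lagrangian of Problem~\eqref{prob:red} with multipliers $\epsilon_1,\epsilon_2\ge 0$ for the lower and upper bounds on $t^{\text{RED}}$, derives the stationarity condition $-2(\beta L/t^{\text{RED}})^3\sum_n\kappa_nc_n^3+\sum_n\beta_n+\epsilon_2-\epsilon_1=0$, and uses complementary slackness to clip the unconstrained stationary point at the lower endpoint --- which is exactly your ``project $\hat t=\sqrt[3]{2A/B}$ onto the feasible interval'' argument, just phrased through KKT machinery instead of elementary monotonicity of the strictly convex scalar function $A/t^2+Bt$. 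Your version is the more transparent of the two for a one-dimensional problem, and your algebra (including the equivalence between $\hat t<t_{\min}$ and $\sum_n\beta_n>2\sum_n\kappa_nc_n^3/(\max_n\{c_n/f^\text{max}_n\})^3$) checks out. The one place where you go beyond the paper is the case $\hat t>\tau$: you are right that the two-branch formula~\eqref{sol:red} silently assumes the upper bound is inactive, and the paper's own proof has the same blind spot --- it carries $\epsilon_2$ through the KKT condition but drops the $t^{\text{RED}*}=\tau$ branch from the final answer without comment (note that for $\sum_n\beta_n\to 0$ the second branch diverges, so the issue is not vacuous). Your proposed repair via the dual problem is, however, only a sketch: tightness of~\eqref{const:time2} at the dual optimum does not by itself force $t^{\text{RED}}<\tau$ in the decoupled subproblem~\eqref{prob:red}, which is solved for \emph{arbitrary} fixed multipliers, so a fully rigorous statement would simply add a third branch $t^{\text{RED}*}=\min\{\hat t,\tau\}$ (or restrict the lemma to multipliers arising at the dual optimum). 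This is a minor caveat shared with the published proof rather than a defect specific to yours.
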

\begin{proof}
    See Appendix~\ref{app:prob-red}.
\end{proof}

\subsection{Maximization of the dual function and interpretation}
The dual function being concave but non-differentiable, we could
now maximize it using the subgradient-based ellipsoid method,
as was done for example in~\cite{coop-mec}. However, in addition to being
unpractical to solve the actual problem (when compared to the use of a convex
optimization solver like \texttt{cvxopt}~\cite{cvxopt}), this method doesn't
offer any additional insight into the structure of the optimal solution.

Instead, we intuitively look at what happens if we maximize the dual function
$g(\vmu, \vbeta, \lambda)$ taking into account the results of
Lemmas~\ref{lemma:map},~\ref{lemma:shu} and~\ref{lemma:red}.
To ease the analysis, we start with $\lambda = 0$ and $\mu_n = 0$ for all $n$.
In this case, $l^*_n = 0$ for all devices and the penalty term $L - \sum_{n=1}^N l^*_n = L$
associated with $\lambda$ appearing in the dual function is strictly positive.
Intuitively, this implies that the task has not been fully distributed across
the devices, violating constraint~\eqref{const:tot-dist}. 
It is thus possible to increase the value of the dual function through this
positive penalty term by increasing the value of $\lambda$.
Because $l^*_n$ is proportional to $\sqrt{\lambda - \alpha\mu_n}$ through $M^*_n$,
this increases the number of bits $l^*_n$ processed by each device.
Moreover, because $l^*_n$ is also inversely proportional to $\sqrt{\kappa_n c^3_n}$
through $M^*_n$, less energy-efficient devices (i.e., the ones with larger
values of $\kappa_n c^3_n$) get fewer bits to process.
The value of $\lambda$ can be increased in this way until
the penalty term $L - \sum_{n=1}^N l^*_n$ equals 0 (i.e., until the task is fully
distributed across the devices).
Next, because $\mu_n = 0$ for all devices as of now, 
the penalty term $\alpha l^*_n - t^\text{SHU*}_nr_n(p^*_n) = \alpha l^*_n$ associated
with $\mu_n$ appearing in the dual function is strictly positive for all devices.
Intuitively, this implies that the rate constraint~\eqref{const:tshu2} is violated
for all devices. It is thus possible to increase the value of the dual function through
this penalty term by increasing the value of $\mu_n$.
Increasing $\mu_n$ has a double effect:
\begin{enumerate*}[label=(\roman*)]
    \item it decreases the value of $l^*_n$ because $l^*_n$ is proportional
    to $\sqrt{\lambda - \alpha\mu_n}$, and
    \item it increases the value of $p^*_n$ because $p^*_n$ is directly
    proportional to $\mu_n$.
\end{enumerate*}
Combined, these two effects work together towards satisfying the rate
constraint~\eqref{const:tshu2}. For devices with very bad channel or
very low maximum RF transmit power $p^\text{max}_n$, $\mu_n$ could increase
so much that $\lambda - \alpha\mu_n$ would become negative, meaning that the
number of bits to be processed $l^*_n$ would fall to 0.
At this point, there is an iterative interplay
between $\lambda$ and $\{\mu_n\}_{n=1}^N$ in which both successively increase
to maximize the dual function until both constraints~\eqref{const:tot-dist}
and~\eqref{const:tshu2} are satisfied and a maximum has been reached.

It is now possible to give a waterfilling-like interpretation of the
structure of the optimal computing load partition $\{l^*_n\}_{n=1}^N$.
First, $\lambda$ acts as a kind of global (i.e., across all devices) water level for
$\{l^*_n\}_{n=1}^N$ through the effective processing rate $M^*_n$.
Then, $\alpha\mu_n$ can be seen as the base of the water vessel of device $n$.
Following the above discussion, this base $\alpha\mu_n$ mainly
depends on the communication capabilities of device $n$ (i.e., $h_n$ and $p^\text{max}_n$).
Finally, the actual water content of each vessel, i.e., $\lambda - \alpha\mu_n$
is divided by $3\kappa_n c^3_n$. This term, related to the computing energy-efficiency
of device $n$ can be interpreted as a pressure applied to the water vessel of
each device. The less energy-efficient device $n$ is, the larger $3\kappa_n c^3_n$
becomes and the more pressure is applied to its water vessel, hence reducing
the corresponding water level and $l^*_n$.

\section{Numerical results}
\label{sec:num-exp}
In this section, the performances of the optimal collaborative-computing
scheme (denoted \texttt{Opt} in what follows) are benchmarked against
various other schemes through numerical experiments.
The schemes used for comparison are
\begin{itemize}
    \item \texttt{Blind}: the task allocation (i.e., choosing the value
    of $l_n$ for each device $n$) doesn't take into account the heterogeneity
    of the devices; the scheme is blind to device diversity (both in terms of computing
    and communicating capabilities). In this case, the variable $l_n$ is set to
    $L/N$ for each device $n$. This corresponds to what is done in most
    works on CDC assuming homogeneous
    devices~\cite{scalableWDC, comm-comp-tradeoff, wireless-mr}.
    \item \texttt{NoDFS}: the CPU frequency of each device $n$ is fixed to
    its maximum value $f^\text{max}_n$ rather than being optimized for
    energy-efficiency. In this case, the variable $t^\text{MAP}_n$ is set
    to $c_nl_n/f^\text{max}_n$ for each device $n$ while $t^\text{RED}_n$ (now
    different for each device) is set to $c_n\beta L/f^\text{max}_n$. This scheme is
    close to the one proposed in our previous work~\cite{me}.
    \item \texttt{Blind-NoDFS}: this scheme combines the two previous cases.
    In this case, $l_n = L/N$ and $t^\text{MAP}_n = \frac{c_n}{f^\text{max}_n}
    \frac{L}{N}$ for each device $n$ while $t^\text{RED}_n = c_n\beta L/f^\text{max}_n$.
    The only optimization left here concerns the Shuffle phase and the variables
    $t^\text{SHU}_n$ and $E_n$. 
    \item \texttt{NoOpt}: in this scheme, nothing is optimized. This is basically
    \texttt{Blind-NoDFS} with $\alpha\frac{L}{N}= t^\text{SHU}_n r_n\left(
    \frac{E_n}{t^\text{SHU}_n}\right)$ and $E_n = t^\text{SHU}_n p^\text{max}_n$.
\end{itemize}

The parameters used in the following numerical
experiments are given in Table~\ref{tab:sim-param}~\cite{coop-mec, 
mec-survey2, mec-survey3}.

\begin{table}[h]
    \caption{Parameters used in the numerical experiments.}
    \label{tab:sim-param}
    \centering
    \begin{tabular}{clc}
        \specialrule{.1em}{0em}{0em}
        \textbf{Parameter} & \textbf{Value} & \textbf{Units} \\
        \specialrule{.1em}{0em}{0em}
        $\kappa_n$ &
        $\overset{\text{i.i.d.}}{\sim}$ $\text{Unif}([10^{-28}, 10^{-27}])$
        & / \\
        $c_n$ &
        $\overset{\text{i.i.d.}}{\sim}$ $\text{Unif}([500, 1500])$
        & [CPU cycles/bit] \\
        $f^\text{max}_n$ & 
        $\overset{\text{i.i.d.}}{\sim}$ $\text{Unif}([1, 3])$
        & [GHz] \\
        \hline
        $h_n$ &
        $\overset{\text{i.i.d.}}{\sim}$ $\mathcal{CN}(0, 10^{-3})$
        (Rayleigh fading)
        & / \\
        $p^\text{max}_n$ &
        $\overset{\text{i.i.d.}}{\sim}$ $\text{Unif}([10, 25])$
        & [mW] \\
        $P^\text{c}_n$ &
        $\overset{\text{i.i.d.}}{\sim}$ $\text{Unif}([10, 25])$
        & [mW] \\
        \hline
        $B$ & 15
        & [kHz] \\
        $N_0$ & 1
        & [nW/Hz] \\
        \specialrule{.1em}{0em}{0em}
    \end{tabular}
\end{table}

\subsection{Maximum computing load and outage probability}
To show that the proposed scheme indeed enhances the computing capabilities
of individual devices, we start by comparing the maximum computing load of \texttt{Opt}
and \texttt{Blind}, noted $L_\text{max}^\text{Opt}$ and $L_\text{max}^\text{Blind}$,
respectively. Other schemes are not included here as
$L_\text{max}^\text{NoDFS} = L_\text{max}^\text{Opt}$ and
$L_\text{max}^\text{Blind-NoDFS} = L_\text{max}^\text{Blind} = L_\text{max}^\text{NoOpt}$.
For \texttt{Opt}, the maximum computing load $L_\text{max}^\text{Opt}$ can be readily 
obtained using Lemma~\ref{lemma:feas}. For \texttt{Blind}, we introduce the following
Lemma.
\begin{lemma}[Maximum computing load of \texttt{Blind}]
    The maximum computing load achievable by the \texttt{Blind} scheme is given
    by
    \[ L^\text{Blind}_\text{max} = N\min\left\{%
            \frac{\tau - \beta L\max_n\left\{\frac{c_n}{f^\text{max}_n}\right\}}{%
        1 + \frac{\alpha f^\text{max}_n/c_n}{r_n(p^\text{max}_n)}} \frac{f^\text{max}_n}{c_n}
        \right\}. \]
\end{lemma}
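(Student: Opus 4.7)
The plan is to mirror the argument used in the proof of Lemma~\ref{lemma:feas} while imposing the additional constraint $l_n = L/N$ for every $n \in [N]$ that defines the \texttt{Blind} scheme. Concretely, I would start from the same feasibility maximization, with the objective $\sum_n l_n$ replaced by the scalar $L$ and the extra equalities $l_n = L/N$. Exactly the same tightening arguments apply at the optimum: constraints~\eqref{const:tred2}, \eqref{const:time2} and \eqref{const:pmax} must all be active (using the entire latency budget, minimizing $t^\text{RED}$, and transmitting at $p^\text{max}_n$), and within each device the Map/Shuffle time split must equalize the rate at which intermediate bits are produced by the Map phase at full speed and the rate at which they can be transmitted during the Shuffle phase. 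This reproduces, per device, the bound derived in~\eqref{eq:tmap-max}.

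Next, because every device is forced to process exactly $L/N$ bits, the per-device feasibility condition becomes
\[
    \frac{L}{N} \le \frac{\tau - \beta L\max_n\left\{\frac{c_n}{f^\text{max}_n}\right\}}{1 + \frac{\alpha f^\text{max}_n/c_n}{r_n(p^\text{max}_n)}}\frac{f^\text{max}_n}{c_n}, \qquad n \in [N].
\]
Taking the most restrictive of these $N$ inequalities (as opposed to summing them, as in Lemma~\ref{lemma:feas}, since all the $l_n$ are tied together rather than chosen independently) and multiplying by $N$ immediately yields the claimed expression for $L^\text{Blind}_\text{max}$.

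The only subtlety — not really an obstacle — is that the right-hand side depends implicitly on $L$ through the Reduce-phase contribution $\beta L\max_n\{c_n/f^\text{max}_n\}$ in the numerator. I would verify, as is implicitly done in Lemma~\ref{lemma:feas} itself, that the right-hand side is continuous and strictly decreasing in $L$ while the left-hand side is strictly increasing and vanishes at $L = 0$, so that a unique positive fixed point exists and coincides with the largest feasible $L$. It is also worth recording as a sanity check that since $N\min_n x_n \le \sum_n x_n$, one immediately recovers $L^\text{Blind}_\text{max} \le L^\text{Opt}_\text{max}$, with equality precisely in the homogeneous case — which is consistent with the numerical discussion that follows.
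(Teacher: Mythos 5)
Your proposal is correct and follows exactly the route the paper takes: the paper's own proof merely sets up the linear program with a single common $l$ and states that ``a reasoning similar to the one used in Lemma~\ref{lemma:feas}'' yields the result, which is precisely the per-device tightening you carry out, with the minimum over devices replacing the sum because the common load ties all $l_n$ together. Your added remarks on the implicit fixed-point dependence on $L$ and the sanity check $L^\text{Blind}_\text{max} \le L^\text{Opt}_\text{max}$ are sound and go slightly beyond what the paper writes down.
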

\begin{proof}
Obtaining $L_\text{max}^\text{Blind}$ requires solving the following linear program
\begin{equation*}
    \begin{aligned}
        & L_\text{max}^\text{Blind} \eqdef
        \underset{l, \vt^\text{MAP}, \vt^\text{SHU}}{\text{maximize}}
        && Nl \\
        & ~~~~~~~~~~~~\text{subject to}
        && \specialcell{%
            c_nl \le t^\text{MAP}_nf^\text{max}_n \hfill\forall n} \\
        &&& \specialcell{%
            \alpha l \le t^\text{SHU}_nr_n(p^\text{max}_n) \hfill\forall n} \\
        &&& t^\text{RED} \ge \beta L\max_n\left\{\frac{c_n}{f^\text{max}_n}\right\} \\
        &&& \specialcell{%
            t^\text{MAP}_n + t^\text{SHU}_n \le \tau - t^\text{RED}~\forall n} \\
        &&& \specialcell{%
            l, t^\text{MAP}_n, t^\text{SHU}_n \geq 0 \hfill\forall n} \\
    \end{aligned}
\end{equation*}
A reasoning similar to the one used in Lemma~\ref{lemma:feas} -- and omitted
here for the sake of space -- can then be used to obtain the analytical expression
given above.
\end{proof}

Values of $L^\text{Opt}_\text{max}$ and $L^\text{Blind}_\text{max}$ for different values
of the allowed latency $\tau$ and various numbers of devices $N$ are plotted in
Fig.~\ref{fig:max-load}. As expected, both $L^\text{Opt}_\text{max}$ and
$L^\text{Blind}_\text{max}$ grow with the allowed latency $\tau$.
However, $L^\text{Opt}_\text{max}$ grows with $\tau$ much faster than
$L^\text{Blind}_\text{max}$ does. 
Next, one can see that increasing the number of devices $N$ for a given
allowed latency $\tau$ is always more profitable for \texttt{Opt} than for
\texttt{Blind}. Furthermore, the benefits of further increasing the number
of devices $N$ remain constant for \texttt{Opt} but quickly saturates
for \texttt{Blind}.
Both observations can be explained by the fact that \texttt{Opt} is able to leverage
devices diversity by optimally exploiting the different computing and communicating
capabilities of the devices while \texttt{Blind}, as per its name, is not.

Another way of looking at the maximum computing loads of the different schemes
is through what we define as the ``outage probability'' of the system. In this context,
the outage probability is defined, for a random heterogeneous set of devices and
a given allowed latency $\tau$, as the probability that the maximum computing load
that can be processed by the system is lower than the actual computing load $L$, i.e.,
\[
    P^*_\text{out} = \Pr\left[L \ge L^*_\text{max}\right].
\]
For a given task size $L$, this probability can be empirically computed by 
averaging over a large number of randomly generated sets of devices. For $L = \SI{10}{Mb}$,
both $P^\text{Opt}_\text{out}$ and $P^\text{Blind}_\text{out}$ are depicted in
Fig.~\ref{fig:outage-prob} as a function of the allowed latency $\tau$ and for several
values of $N$. This plot again demonstrates the benefits of leveraging devices diversity
to distribute the task among the devices. At the opposite, we see that \texttt{Blind}
suffers from devices diversity. Indeed, for larger values of the allowed latency $\tau$,
increasing the number of devices $N$ penalizes \texttt{Blind} by increasing its outage
probability $P_\text{out}^\text{Blind}$. Intuitively, this comes from the fact that
increasing the number of devices $N$ increases the probability of having a very weak
device limiting the whole system.
Mathematically, the lower tail of the distribution of $L^\text{Blind}_\text{max}$
grows larger and larger with $N$, making the distribution more and more skewed towards small
values of $L^\text{Blind}_\text{max}$. This also explains why this trend was not visible on
Fig.~\ref{fig:max-load} as it only shows the mean of the distribution of
$L^\text{Blind}_\text{max}$.
In addition, it appears that the benefits on $P_\text{out}^\text{Blind}$ of allowing a looser
deadline (for a given $N$) saturate as the value of $\tau$ increases beyond a certain
point that varies with the number of devices $N$. Again, and for the same reason, this trend
was not visible on Fig.~\ref{fig:max-load} and cannot be explained by looking at the mean
of $L_\text{max}^\text{Blind}$ only. This saturation effect appears when the mode of the
distribution of $L_\text{max}^\text{Blind}$ becomes larger than the value of the actual
computing load $L$ used to compute $P_\text{out}^\text{Blind}$. Passed this point, the
benefits on $P_\text{out}^\text{Blind}$ of further pushing the mode to larger values by
increasing $\tau$ become smaller and smaller.

Coming back to the example application of collaborative on-device ML/DL inference,
this indicates that \texttt{Opt} enables inferences with larger ML/DL models (i.e.,
more accurate/complex inferences) for the same latency, or the other way
around, similar inferences for a smaller latency.

\begin{figure}
    \centering
    \includegraphics{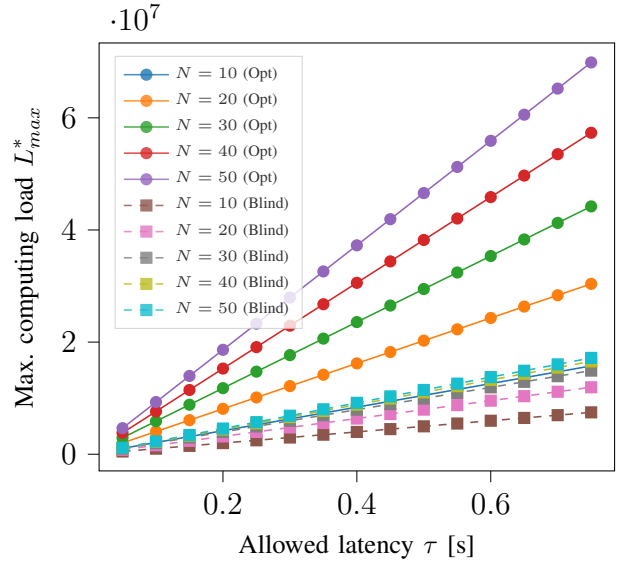}
    \caption{Maximum computing loads $L_\text{max}^\text{Opt}$ and
    $L_\text{max}^\text{Blind}$ averaged over 1.000.000 random instances of the problem.
    Note that $L^\text{Opt}_\text{max}$ for $N = 10$ is hidden by
    $L^\text{Blind}_\text{max}$ for $N = 30, 40$ and 50.}
    \label{fig:max-load}
\end{figure}

\begin{figure}
    \centering
    \includegraphics{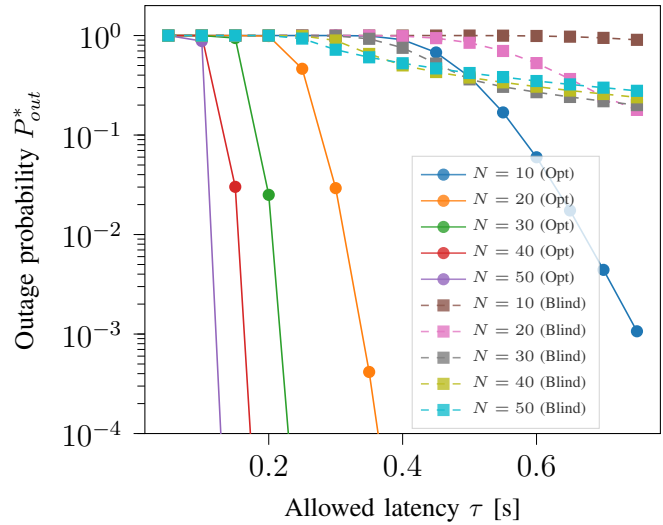}
    \caption{Empirical outage probability $P_\text{out}^\text{Opt}$ and
    $P_\text{out}^\text{Blind}$ for $L = \SI{10}{Mb}$, averaged over 1.000.000 random
    instances of the problem.}
    \label{fig:outage-prob}
\end{figure}

\subsection{Energy consumption}
We now look at the energy consumed by the different collaborative-computing schemes.
Fig.~\ref{fig:tot-energy} depicts the total energy consumed per bit processed for
different numbers of devices $N$, while Fig.~\ref{fig:energy-breakdown} depicts
the energy consumed by each phase of the collaboration (i..e, Map, Shuffle and Reduce).
First, one can observe that the energy consumed by both \texttt{Blind-NoDFS}
and \texttt{NoOpt} is actually the same. This stems from the fact that it is always
optimal (from an energy-efficiency point of view) for constraint~\eqref{const:tshu2}
to be met. Indeed, the opposite would mean that the device is investing either too much
time $t^\text{SHU}_n$ (hence increasing the energy consumption of the communications
circuits $t^\text{SHU}_nP^c_n$) or too much RF energy $E_n$ with regards to the number
of bits $\alpha l_n$ that needs to be transmitted in the Shuffle phase. 
For the same reason, constraint~\eqref{const:pmax2}
is almost always satisfied as well, meaning that devices participating to the Shuffle phase
transmit at the maximum RF power allowed, i.e., $p^\text{max}_n$. Schemes
\texttt{Blind-NoDFS} and \texttt{NoOpt} are thus equivalent and both transmit at the
maximum RF transmit power and at the maximum rate. These two observations are valid for
all the other schemes as well.
In addition, the energy per bit consumed by both \texttt{Blind-NoDFS} and
\texttt{NoOpt} is roughly constant with the number of devices.
At the opposite, the energy consumed by the other schemes decreases with $N$
as diversity across the devices is exploited for energy-efficiency. Interestingly,
optimizing $\{t^\text{MAP}_n\}_{n=1}^N$ and $t^\text{RED}$ only (in \texttt{Blind})
is more beneficial than optimizing $l_n$ only (in \texttt{NoDFS}), even though the
number of bits assigned to each device for processing by \texttt{Blind} is uniform
across the devices, and thus blind to diversity. Combining both schemes in
\texttt{Opt} leads to a gain in energy-efficiency with respect to \texttt{NoOpt} reaching
two orders of magnitude for large values of $N$.

Fig.~\ref{fig:energy-breakdown} breaks down the energy consumption of the different
schemes in 3 components: $E^\text{MAP}$, $E^\text{SHU}$ and $E^\text{RED}$. Note that
\texttt{NoOpt}, being equivalent to \texttt{Blind-NoDFS}, has been omitted to avoid
cluttering the plot. First, it appears that the energy consumption of the Map phase
largely dominates the energy consumption of the Shuffle and Reduce phases for small
values of $N$\footnote{Note that this statement is strongly dependent on the energy
consumption model and the parameters used for the numerical experiments. As an example,
increasing the number of bits transmitted during the Shuffle phase, $\alpha l_n$, through
the total size of the intermediate computation results $\beta L$ would directly result in
an increase of $E^\text{SHU}$ by the same factor.}.
As the number of devices $N$ increases, this difference decreases 
for all schemes leveraging diversity across the devices (i.e., all but
\texttt{Blind-NoDFS}). At the opposite, the energy consumed by the Shuffle phase
increases with the number of devices $N$, no matter the scheme used. This figure also
shows that there is not much (if anything) to gain from optimization in the Shuffle
phase. Next, one can see that the energy efficiency of the Reduce phase increases
with $N$ when $t^\text{RED}$ can be optimized (i.e., when devices can perform DFS).
This decrease with $N$ is however slower than what we observed for the Map phase.
For \texttt{Blind}, this can be explained by the fact that priority in the
optimization is given to the more energy intensive Map phase. For \texttt{Opt}, this comes
from the fact that, at the opposite of the Map phase, all devices have to perform the
Reduce phase. Finally, for \texttt{NoDFS} and \texttt{Blind-NoDFS}, each device has to
perform the Reduce phase at full speed causing $E^\text{RED}$ to increase with $N$.

\begin{figure}
    \centering
    \subfloat[%
        Comparison of the total energy consumed by the different schemes
        as a function of the number of devices $N$. Note that the energy consumption is
        the same for both \texttt{NoOpt} (yellow curve) and \texttt{Blind-NoDFS}
        (black curve). 
    ]{%
        \includegraphics{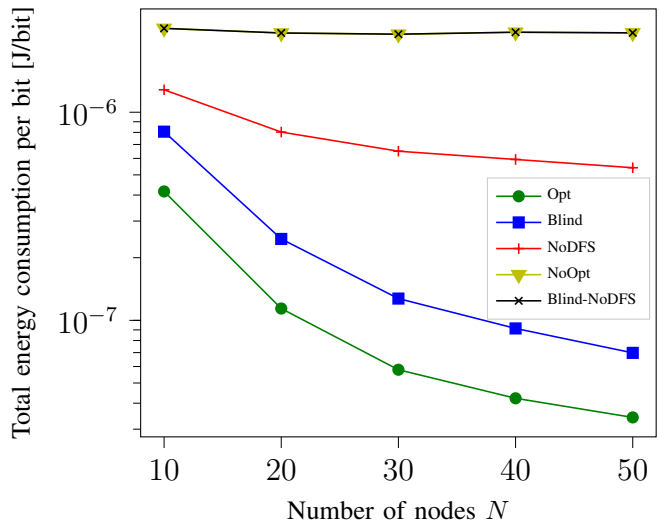}
    \label{fig:tot-energy}}
    \hfill
    \subfloat[%
        Breakdown of the energy consumed by the three phases of
        the collaboration as a function of the number of nodes $N$.
        Note that the energy consumption for the Reduce phase is
        the same for both \texttt{NoDFS} and \texttt{Blind-NoDFS}.
    ]{%
        \includegraphics{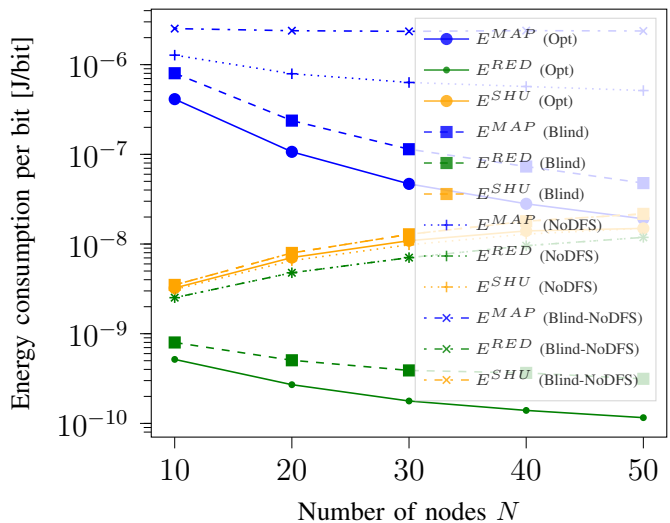}
    \label{fig:energy-breakdown}}
    \caption{Energy consumption of the devices for $L = \SI{1}{Mb}$,
    $\beta L = \SI{0.1}{kb}$ and $\tau = \SI{100}{\milli\second}$ 
    as a function of the number of devices $N$.
    Each point is the result of an average over 100 feasible (for each scheme) instances
    of the problem, i.e., instances for which $L \le L^\text{Opt}_\text{max},
    L^\text{Blind}_\text{max}$.
    Note that the parameters have also been chosen to
    allow comparison between the schemes, i.e., to ensure that feasible instances
    arise with reasonable probability for all schemes.}
    \label{fig:energy}
\end{figure}

\subsection{Energy-latency trade-off}
Fig.~\ref{fig:energy-lat} depicts the total energy consumption per bit and
the energy consumed per bit by each phase for the different schemes and
for different values of the allowed latency $\tau$. Interestingly,
Fig.~\ref{fig:energy-lat} closely resembles Fig.~\ref{fig:energy}, implying
that the effect of increasing the number of devices $N$ is roughly equivalent
to the effect of increasing the allowed latency $\tau$.
The underlying mechanisms, however, are different. For schemes where devices are
able to perform DFS (i.e., \texttt{Opt} and \texttt{Blind}), increasing $\tau$
enables the devices to further decrease their CPU frequency, hence saving energy.
For \texttt{NoDFS}, increasing $\tau$ enables the system to increase the number
of bits assigned to the most energy-efficient devices, hence reducing the load on
less energy-efficient devices and again saving energy.

\begin{figure}[b]
    \centering
        \includegraphics{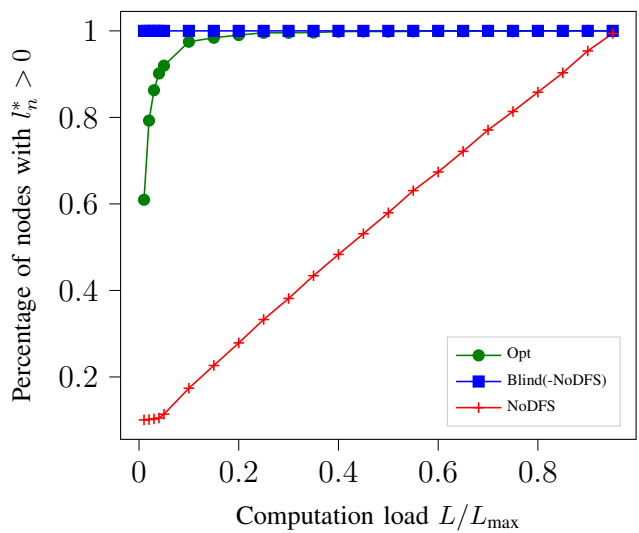}
    \caption{Average fraction of devices participating to the collaboration, i.e.,
    devices with $l^*_n > 0$ that thus participate to the Map and Shuffle phases, as
    a function of the computing load $L/L_\text{max}$.}
    \label{fig:nb-nodes}
\end{figure}

\begin{figure}
    \centering
    \subfloat[%
        Comparison of the total energy consumed by the different schemes
        as a function of the allowed latency $\tau$.
        Note that the energy consumption is
        the same for both \texttt{NoOpt} (yellow curve) and \texttt{Blind-NoDFS}
        (black curve). 
    ]{%
        \includegraphics{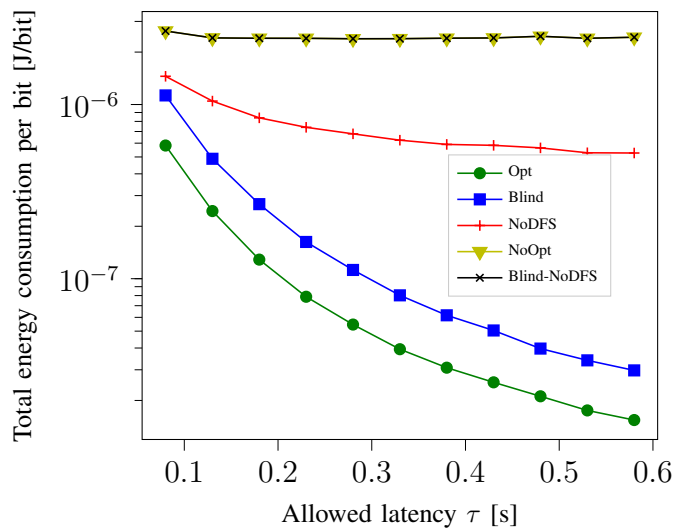}
    \label{fig:tot-energy-lat}}
    \hfill
    \subfloat[%
        Breakdown of the energy consumed by the three phases of
        the collaboration as a function of the allowed latency $\tau$.
        Note that the energy consumption for the Reduce phase is
        the same for both \texttt{NoDFS} and \texttt{Blind-NoDFS}.
    ]{%
        \includegraphics{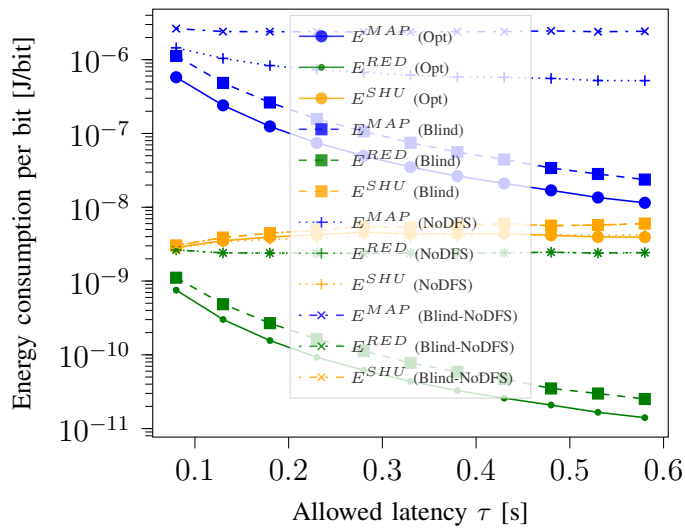}
    \label{fig:energy-lat-breakdown}}
    \caption{Energy consumption of the devices when $L = \SI{1}{Mb}$,
    $\beta L = \SI{0.1}{kb}$ and $N = 10$. Each point is the result of an
    average over 100 feasible (for each scheme) instances
    of the problem, i.e., instances for which $L \le L^\text{Opt}_\text{max},
    L^\text{Blind}_\text{max}$.
    Note that the parameters have also been chosen to
    allow comparison between the schemes, i.e., to ensure that feasible instances
    arise with reasonable probability for all schemes.}
    \label{fig:energy-lat}
\end{figure}

\subsection{Number of participating devices}
Finally, Fig.~\ref{fig:nb-nodes} shows the average fraction of devices
participating to the collaboration, i.e., devices with $l_n > 0$ that
thus participate to the Map and Shuffle phases, as a function of the
computing load $L/L_\text{max}$. For \texttt{Blind} (and \texttt{Blind-NoDFS}),
this fraction is of course constant and equal to $1$ as $l_n = L/N$ for all $n$.
For \texttt{Opt}, this fraction starts at around $0.6$ for very small
computing loads and quickly reaches $1$ for computing loads $> 0.2$.
At the opposite, for \texttt{NoDFS}, the fraction of devices participating to the
Map and Shuffles phases closely follows the fraction $L/L_\text{max}$.
To explain these radically different behaviors, we look at the energy consumed
by the Map phase at each device $n$ for both schemes.
For \texttt{Opt} first, Eq.~\eqref{eq:Emap} indicates that $E^\text{MAP}_n$ is a cubic
function of $l_n$. For \texttt{NoDFS}, injecting $t^\text{MAP}_n = c_nl_n/f^\text{max}_n$
in~\eqref{eq:Emap} shows that $E^\text{MAP}_n$ becomes a linear function of $l_n$.
This explains why the computing load is more evenly spread across devices for \texttt{Opt}
than for \texttt{NoDFS}.

\section{Discussion and future works}
\label{sec:ccl}
This work built upon our previous work~\cite{me} to further highlight the
benefits of leveraging devices diversity -- whether in terms of
computing or communication capabilities -- to enhance individual computing
capabilities of the devices while increasing energy-efficiency of the system
as a whole.
As mentioned in the introduction, this makes collaborative-computing another
potential viable architecture to be used in conjunction with MEC and MCC to
enable ubiquitous computing on heterogeneous devices.
However, further validation with more realistic and practical assumptions
is needed.
Interferences between devices during the Shuffle phase, for example, were
neglected in this work.
As the interference level is expected to increase with the number of devices
participating to the Shuffle phase, taking into account interference in the
communication model could have a significant impact on the number of devices
participating to the collaboration.
Non-causal knowledge of the uplink channels was also assumed to allow for
offline optimization of the collaboration. To get rid of this unrealistic
assumption, one could instead consider the expectation taken over the channel
gain $h_n$ of $r_n(p_n$) in constraint~\eqref{const:tshu} for a given channel
gain distribution.
Adaptation to the actual channel condition observed during the Shuffle phase
could then be performed on-the-fly by each device.
Downlink communications were also neglected in this work. While this makes sense
in a scenario where optimizing the energy-consumption of end devices is the
primary objective, care should be taken to avoid simply ignoring the energy
burden imposed to the edge of the network.
On the other hand, the system could also
optimize channel and bandwidth allocation across devices, considered
to be given in this work.
The Shuffle phase could also be further optimized by integrating results
from CDC~\cite{scalableWDC, coding-edge, comm-comp-tradeoff, wireless-mr,%
coding-fog-computing,xu2019}.
These additional degrees of freedom could enable
additional energy savings and increased system-wise performance.
This would however come at the cost of a
complexified optimization problem, and a sweet spot between optimization
complexity and efficiency gains should thus be found.

\section*{Acknowledgment}
{\small{%
AP is a Research Fellow of the \textsc{F.R.S.-FNRS}\@.
This work was also supported by \textsc{F.R.S.-FNRS}
under the \textsc{EOS} program (project 30452698,
``MUlti-SErvice WIreless NETwork''). Authors would also
like to thank our colleague Emre Kilcioglu for proofreading
and comments, and the anonymous reviewers for their
constructive criticism.}}

\appendix
\subsection{Proof of Lemma~\ref{lemma:map}}
\label{app:prob-map}
Problem~\eqref{prob:map} being convex, the optimal solution satisfies
the KKT conditions.
The Lagrangian of problem~\eqref{prob:map} is given by
\begin{align*}
    \mathcal{L}_{1,n} &= 
    \textstyle{\frac{\kappa_nc_n^3l_n^3}{(t^\text{MAP}_n)^2} + (\alpha\mu_n - \lambda)l_n
    + \beta_n t^\text{MAP}_n - \gamma_{1,n}l_n} \\
    & + \textstyle{\gamma_{2,n}\left(l_n - \frac{t^\text{MAP}_nf^\text{max}_n}{c_n}\right)
    - \gamma_{3,n}t^\text{MAP}_n
    + \gamma_{4,n}\left(t^\text{MAP}_n - \tau\right)}
\end{align*}
with $\gamma_{1,n}, \gamma_{2,n}, \gamma_{3,n}, \gamma_{4,n} \ge 0$ the Lagrange
multipliers.
The KKT conditions are then given by
\begin{equation}
    \fpart{\mathcal{L}_{1,n}}{l_n} = 3\frac{\kappa_n c^3_n l^2_n}{(t^\text{MAP}_n)^2}
    + \alpha\mu_n - \lambda - \gamma_{1,n} + \gamma_{2,n} = 0
    \label{eq:map-kkt1}
\end{equation}
\begin{equation}
    \fpart{\mathcal{L}_{1,n}}{t^\text{MAP}_n} = -2\frac{\kappa_n c^3_n l^3_n}{(t^\text{MAP}_n)^3}
    + \beta_n - \gamma_{2,n}\frac{f^\text{max}_n}{c_n} - \gamma_{3,n} + \gamma_{4,n} = 0
    \label{eq:map-kkt2}
\end{equation}
with the complementary slackness conditions
\begin{align}
    \gamma_{1,n}l_n &= 0,
    \label{eq:csc-1} \\ 
    \gamma_{2,n}\left(l_n - \frac{t^\text{MAP}_nf^\text{max}_n}{c_n}\right) &= 0,
    \label{eq:csc-2} \\ 
    \gamma_{3,n}t^\text{MAP}_n &= 0,
    \label{eq:csc-3} \\
    \gamma_{4,n}\left(t^\text{MAP}_n - \tau\right) &= 0.
    \label{eq:csc-4}
\end{align}
We first obtain~\eqref{sol:map-l},~\eqref{sol:map-M} and~\eqref{sol:map-gamma} using
condition~\eqref{eq:map-kkt1} and complementary slackness conditions~\eqref{eq:csc-1}
and~\eqref{eq:csc-2}. Substituting~\eqref{sol:map-l} in~\eqref{eq:map-kkt2} and defining
$\rho_{1,n} = \gamma_{4,n} - \gamma_{3,n}$, we then obtain~\eqref{sol:map-t} using
complementary slackness conditions~\eqref{eq:csc-3} and~\eqref{eq:csc-4}.

\subsection{Proof of Lemma~\ref{lemma:shu}}
\label{app:prob-shu}
Problem~\eqref{prob:shu} being convex, the optimal solution satisfies
the KKT conditions.
The Lagrangian of problem~\eqref{prob:shu} is given by
\begin{align*}
    \mathcal{L}_{2,n} &= \textstyle{E_n + t^\text{SHU}_nP^c_n - \mu_nt^\text{SHU}_n
    r_n\left(\frac{E_n}{t^\text{SHU}_n}\right) + \beta_nt^\text{SHU}_n
    - \delta_{1,n}E_n} \\
    & + \delta_{2,n}(E_n - t^\text{SHU}_np^\text{max}_n)
    -\delta_{3,n}t^\text{SHU}_n + \delta_{4,n}\left(t^\text{SHU}_n - \tau\right) 
\end{align*}
with $\delta_{1,n}, \delta_{2,n}, \delta_{3,n}, \delta_{4,n} \ge 0$ the Lagrange
multipliers. The KKT conditions are then given by
\begin{equation}
    \fpart{\mathcal{L}_{2,n}}{E_n} = 1 - \delta_{1,n} + \delta_{2,n}
    - \mu_n\frac{\frac{h_n}{N_0}}{1 + \frac{E_n}{t^\text{SHU}_n}\frac{h_n}{BN_0}} = 0
    \label{eq:shu-kkt1}
\end{equation}
\begin{align}
    \fpart{\mathcal{L}_{2,n}}{t^\text{SHU}_n} &= 
    \textstyle{\mu_n\frac{\frac{E_n}{t^\text{SHU}_N}\frac{h_n}{N_0}}{%
        1 + \frac{E_n}{t^\text{SHU}_n}\frac{h_n}{BN_0}}
    - \mu_nr_n\left(\frac{E_n}{t^\text{SHU}_n}\right)
    + P^c_n + \beta_n}
    - \delta_{2,n}p^\text{max}_n - \delta_{3,n} + \delta_{4,n} = 0
    \label{eq:shu-kkt2}
\end{align}
with the complementary slack conditions 
\begin{align}
    \delta_{1,n}E_n &= 0
    \label{shu:csc-1} \\
    \delta_{2,n}\left(E_n - t^\text{SHU}_np^\text{max}_n\right) &= 0
    \label{shu:csc-2} \\
    \delta_{3,n}t^\text{SHU}_n &= 0
    \label{shu:csc-3} \\ 
    \delta_{4,n}\left(t^\text{SHU}_n - \tau\right) &= 0. 
    \label{shu:csc-4} 
\end{align}
We first obtain~\eqref{sol:shu-E},~\eqref{sol:shu-p} and~\eqref{sol:shu-delta} using
condition~\eqref{eq:shu-kkt1} and complementary slackness conditions~\eqref{shu:csc-1}
and~\eqref{shu:csc-2}. Substituting~\eqref{sol:shu-E} in~\eqref{eq:shu-kkt2} and defining
$\rho_{2,n} = \delta_{4,n} - \delta_{3,n}$, we then obtain~\eqref{sol:shu-t} using
complementary slackness conditions~\eqref{shu:csc-3} and~\eqref{shu:csc-4}.

\subsection{Proof of Lemma~\ref{lemma:red}}
\label{app:prob-red}
Problem~\eqref{prob:red} being convex, the optimal solution satisfies
the KKT conditions.
The Lagrangian of problem~\eqref{prob:red} is given by
\begin{align*}
    \mathcal{L}_3 &=
    \textstyle{\sum_{n=1}^N \frac{\kappa_nc^3_nT^3}{(t^\text{RED})^2} +
    \beta_nt^\text{RED} + \epsilon_1\left(\max_n\left\{\frac{c_n\beta L}{f^\text{max}_n}\right\}
    - t^\text{RED}\right)} 
    + \epsilon_2\left(t^\text{RED} - \tau\right)
\end{align*}
with $\epsilon_1, \epsilon_2 \ge 0$ the Lagrange multipliers. The KKT conditions are
then given by
\begin{equation}
    \fpart{\mathcal{L}_3}{t^\text{RED}} =
    \epsilon_2 - \epsilon_1
    -2\left(\frac{T}{t^\text{RED}}\right)^3\sum_{n=1}^N \kappa_nc^3_n
    + \sum_{n=1}^N \beta_n = 0
    \label{eq:red-kkt}
\end{equation}
with the complementary slackness conditions
\begin{align}
    \epsilon_1\left(\max_n\left\{\frac{c_n\beta L}{f^\text{max}_n}\right\} - t^\text{RED}\right) &= 0
    \label{red:csc-1}
    \\
    \epsilon_2\left(t^\text{RED} - \tau\right) &= 0.
    \label{red:csc-2}
\end{align}
Condition~\eqref{eq:red-kkt} together with complementary slackness
conditions~\eqref{red:csc-1}~\eqref{red:csc-2} allow us to
obtain~\eqref{sol:red}.

\bibliographystyle{IEEEtran}
\bibliography{main.bib}

\end{document}